\newcommand {\E}{\mathbf E}
\renewcommand {\P}{\mathcal P}
\newcommand{\F}{\mathcal{F}}
\newcommand{\R}{{\mathcal R}}
\newcommand{\M}{\mathcal M}
\newcommand{\OPT}{{\tt {OPT}}}
\newcommand{\WMP}{{\mathcal{A}}}
\newcommand{\DD}{{\mathcal{D}}}
\newcommand{\N}{{\mathcal{N}}}
\newcommand{\junk}[1]{}
\newcommand{\U}{{\bf U}}
\renewcommand{\u}{{\mathbf u}}
\newcommand{\Z}{{\mathcal Z}}
\newcommand{\G}{\mathcal G}
\renewcommand{\vec}[1]{{\mathbf #1}}
\newcommand{\C}{{\mathcal C}}
\newcommand{\expert}{{\tt EQ}}
\newtheorem{theorem}{Theorem}[section]
\newtheorem{lemma}[theorem]{Lemma}
\newtheorem{observation}{Observation}
\newtheorem{definition}[theorem]{Definition}
\newcommand{\fullversion}[1]{{#1}}
\newcommand{\shortversion}[1]{{}}
\begin{document}

\title{Optimal Auctions via the Multiplicative Weight Method}
\author{
Anand Bhalgat\footnote{University of Pennsylvania. Supported in part by NSF grant IIS-0904314. Email: {\em bhalgat@seas.upenn.edu.}} \and Sreenivas Gollapudi\footnote{Microsoft Research Search Labs. Email: {\em sreenivas.gollapudi@microsoft.com}.} \and Kamesh Munagala\footnote{Duke University. Supported by an Alfred P. Sloan Research  Fellowship, a gift from Cisco, and by NSF via grants CCF-0745761, CCF-1008065, and IIS-0964560. Part of this work was done while the author was visiting Twitter, Inc. Email: {\em kamesh@cs.duke.edu.}}}
%\affaddr{University of Pennsylvania}
%\email{bhalgat@seas.upenn.edu}
%\alignauthor
%Sreenivas Gollapudi\\
%\affaddr{Microsoft Research SVC}
%\email{sreenig@microsoft.com}
%\alignauthor
%Kamesh Munagala\\
%\affaddr{Duke University}
%\email{kamesh@cs.duke.edu}
%}

%\author{Anand Bhalgat \and Sreenivas Gollapudi \and Kamesh Munagala}
\maketitle

\begin{abstract}
We show that the multiplicative weight update method provides a simple recipe for designing and analyzing optimal Bayesian Incentive Compatible (BIC) auctions, and reduces the time complexity of the problem to pseudo-polynomial in parameters that depend on single agent instead of depending on the size of the joint type space. We use this framework to design computationally efficient optimal auctions that satisfy ex-post Individual Rationality in the presence of constraints such as (hard, private) budgets and envy-freeness. We also design optimal auctions when buyers and a seller's utility functions are non-linear.  Scenarios with such functions include (a) auctions with ``quitting rights'', (b) cost to borrow money beyond budget, (c) a seller's and buyers' risk aversion.
%We also provide an optimal budget feasible procurement mechanism when the auctioneer has linear utility.
Finally, we show how our framework also yields optimal auctions for variety of auction settings considered in~\cite{CDW_EC12,AlaeiFHHM12,CDW_STOC12,CDW_FOCS12,CDW_SODA13}, albeit with pseudo-polynomial running times.
\end{abstract}

%\category{F.2.2}{Theory of Computation}{Nonnumerical Algorithms and Problems}
%\category{J.4}{Computer Applications}[Social and Behavioral Sciences]
%\terms{Algorithms, Economics, Theory}
%\keywords{Bayesian Mechanism Design, Auctions}

\section{Introduction}
In this paper, we present a simple framework for designing Bayesian Incentive Compatible (BIC) auctions. This model of auction design was pioneered by Myerson~\cite{myerson}. There is a common prior over the private valuations of all buyers, and the mechanism has to enforce incentive compatibility in expectation over the revealed types of other buyers (where these types are assumed to be drawn from the prior), and the randomness introduced by the mechanism. The optimal BIC auction design problem is purely computational - in most cases, it is possible to design an optimal mechanism in time polynomial in the support of the {\em joint type space} of the priors of all buyers. As the support can be exponential in the number of buyers, the question arises: {\em Can we design optimal auctions whose computational complexity is polynomial in the number of buyers, items, and parameters that only depend on individual buyers?} This was first posed as a computational problem in~\cite{vince}.

Though this question was answered in the affirmative in economic literature~\cite{myerson,border} for simple settings such as multi-unit auctions, it remained largely open in more general settings. An impressive  sequence of recent papers~\cite{HartlineL,CDW_EC12,AlaeiFHHM12,CDW_STOC12,CDW_FOCS12,CDW_SODA13} answer this question in the affirmative under a wide variety of allocation constraints, with both single and multi-dimensional types for each buyer. These algorithms are based on reducing the revenue maximization problem to a convex combination over analogous welfare maximization problems on the same allocation constraints via convex programming techniques, hence reducing an exponential time computation over the joint type space to polynomial time in the type space of a single buyer.

\paragraph{\bf Our Contributions}
In this paper, we show that the classical multiplicative weight update method~\cite{Y95,PST95,AHK05} yields a simple framework for designing optimal auctions along with an elementary analysis.  This framework, presented in Section~\ref{sec:framework}, is based on writing an {\em explicit program} connecting ex-interim variables to ex-post variables (allocations, payments, utilities, etc). It then reduces the revenue maximization problem to single-bidder optimization problems using the dual (Lagrangian) variables in the explicit program, hence reducing the running time to depend only on single-buyer input parameters. Its key advantages stem from the nature of the problem formulation that {\em explicitly encodes the optimal mechanism's action in each realization}:
\begin{itemize}
\item It is extremely simple to state and generalize.
%\item It accommodates a larger set of ex-post constraints than previous techniques.
%\item It is robust to various methods of writing the linear constraints over arbitrary spaces.
\item In contrast with previous techniques, we do not have to argue about feasibility of ex-interim solutions or explicitly convert them to ex-post allocations. The technique is ``one-shot" yielding the mechanism directly from the dual variables.
%; the technique works for non-convex spaces as long as per vector optimization problem can be solved.
\item The ellipsoid based techniques developed in the previous work are sensitive to the geometry of the feasible ex-interim allocation region defined by the separation oracle; when this region is not convex, such as when the welfare maximization problem can only be solved approximately ~\cite{CDW_SODA13}, the optimization becomes complex. In contrast, the multiplicative weight update based techniques do not face any such hurdle (as long as the oracle subproblem can be solved) and yield a relatively simple approach to design mechanisms in such conditions (see Appendix~\ref{sec:general_allocation_constraint}).
%    the optimization becomes non-trivial when this region in not convex (e.g. when the allocation problem can only be solved approximately \cite{CDW13}). In comparison, multiplicative weight update based algorithms are independent of the
%     When this region is not a poly-top, the optimization become hard.

%We can easily identify cases where the single-buyer optimization uses only non-negative virtual valuations and payments, based on the sign of the dual variables. This aspect is useful for tractability of the single-agent problem, especially when it can only be solved approximately (see Section~\ref{sec:framework}).
\end{itemize}
Our approach can be used to derive recent results (mentioned above) on optimal auction design with multi-dimensional type spaces and allocation constraints~\cite{CDW_EC12,AlaeiFHHM12,CDW_STOC12,CDW_FOCS12,CDW_SODA13} (see Appendix~\ref{sec:general_allocation_constraint}), albeit with running times that are pseudo-polynomial in single-bidder parameters (instead of fully polynomial). This is however {\em not} the main focus of the paper.
%Our framework gives simple algorithms to design optimal auctions for various settings considered in~\cite{CDW_EC12,AlaeiFHHM12,CDW_STOC12,CDW_FOCS12,CDW_SODA13} (see Section~\ref{sec:framework}).
We focus on  designing optimal mechanisms for several new settings where the only known previous results were approximation algorithms. Though previous techniques can be adapted to solve these variants, we illustrate how our technique provides a direct approach to such auction design problems. The settings we consider involve designing mechanisms for the following scenarios:
\begin{itemize}
\item Auctions with {\em hard payment constraints} such as budgets;
\item Making the  BIC mechanism and utility model more robust to the {\em common prior assumption}, by enforcing individual rationality ex-post (in presence of hard payment constraints), and
\item {\em Non-linear utility functions} for seller and buyers, which includes scenarios such as risk averse seller and/or buyers, buyers with quitting rights, and cost to borrow money beyond budget.
\end{itemize}
We discuss these aspects briefly next, before presenting our results.

%These include ex-post individually rational mechanisms for: Multi-dimensional auctions with convex payment constraints and private budget constraints;  Multi-unit auctions with arbitrary private utility functions in allocation and payment; and Budget feasible procurement auctions with linear utility functions. We also present optimal auctions in stronger (and more non-linear) utility models than the standard BIC model.

\smallskip
\noindent
{\em Ex-post Rationality and Payment Constraints:} A criticism of Bayesian Incentive Compatibility (BIC) is that it is not robust to the common prior assumption on all buyers; in particular, if a buyer is agnostic to this prior information, the allocations and payments may make no sense to him. See for instance~\cite{morris,mook} for a discussion of this aspect, as well as the discussion implicit in~\cite{CDW_EC12}. One simple constraint to make the mechanism more robust is to ensure that the voluntary participation (or individual rationality (IR)) constraint always holds: Given a buyer's revealed type, its allocation and payment in {\em any} outcome of the mechanism yield it non-negative utility for that revealed type. This is a very intuitive constraint, since a buyer will participate even when he has no knowledge of the prior. In several settings, a BIC mechanism can indeed be converted into one where this constraint holds as well. A notable exception to this happens when there are {\em payment constraints} on the buyers.

We consider BIC mechanism design in settings where there are general ex-post constraints not only on allocations, but also on the {\em payments} buyers make. The most important example is constraints on payments made by individuals, {\em i.e.}, budget constraints; another example is fairness constraints on allocations and payments across buyers, such as the popular envy-freeness constraint. The presence of these ex-post payment constraints reduces the participation risk from an individual buyer's perspective and encourages their participation.

Consider for instance the case where buyers have (possibly private) {\em hard budget constraints}. There is a fairly long line of work~\cite{che1,che2,laffont,border,vohra} in economics literature on optimal BIC auctions with either public or private budget constraints. It was realized early on that budgets make the problem significantly harder even when they are publicly known. One crucial difference is that in the presence of payment constraints, a BIC mechanism can be implemented as an all-pay mechanism where buyers submit payment with bids (regardless of final allocations); this is ex-interim IR, but cannot be converted to a mechanism that is ex-post IR.  As far as we are aware, the design of optimal auctions with ex-post IR constraints has not been considered in literature before. %In fact, the techniques for designing optimal auctions in {\em all} previous work~\cite{che1,che2,laffont,border,vohra,CDW_EC12,CDW_FOCS12} only handle case where IR constraints are satisfied ex-interim. We note, both~\cite{CDW_EC12} and~\cite{CDW_FOCS12} explicitly state they cannot handle ex-post IR when there are hard budget constraints.

\smallskip
\noindent
{\em Non-linear Utilities:} In many auction design problems, the buyer's utility is not linear in his allotment or payment. Consider, for example, auctions with {\em quitting rights} for buyers~\cite{cj}. Unlike the traditional BIC model where the allocations and payments are binding ex-post (so that the utility is linear in each scenario and only quasi-linear in expectation), in the quitting rights model, a buyer can decide to not participate (or quit) ex-post, so that the utility is quasi-linear in each outcome of the mechanism, and hence non-linear. The presence of quitting rights is important for a mechanism, as in variety of settings, the outcome cannot be legally enforced on buyers and it remains a buyer's choice to accept the outcome {\em ex-post}. %Indeed, the presence of quitting rights makes the mechanism more attractive for buyers.

We also consider a setting where a buyer incurs an arbitrary monotone cost to arrange for payments beyond his budget~\cite{DHW11}; this includes scenarios where a buyer can take a loan for additional payment, but he incurs an extra cost for the loan set up and the interest. Lastly, we consider risk-aversion of buyers and the seller~\cite{EF99}. Again, previous techniques do not address such constraints.

\iffalse
%\medskip
%A second illustration of payment constraints is envy freeness, where a buyer should not derive more utility from the allocation and payment made to another buyer in {\em any} scenario. This constraint for instance prevents allocating to one buyer at zero price in a scenario where another buyer of the same type is charged a price for the same allocation, something that could easily happen when allocations and payments are randomized. A final illustration is non-linear utility functions either of the buyer on allocations and payments, or of the seller in payments received. Again, these constraints and functions are on allocations and payments in {\em each scenario}.
\fi

\paragraph{\bf Informal Problem Statements and Our Results}
We present {\em efficient algorithms for optimal BIC auctions} that satisfy ex-post IR in the presence of constraints on both payment and allocation in fairly general settings. We also design the {\em optimal auctions} in many models where buyers have {\em non-linear utility over allocation and payment}.  %To the best of our knowledge, previous results for {\em all} these problems were constant factor approximations (or worse).

We assume the buyer's type is discrete, and there are $n$ buyers and $m$ items. The running time we obtain  are polynomial in the $n, m, \frac{1}{\epsilon}$, and the size of the type space of each buyer. It also depends polynomially on the ratio of largest to smallest valuations, and the ratio largest to smallest marginal prior probabilities for a single buyer.
\begin{itemize}
\item  {\em Multi-unit auction with arbitrary valuations} (Section~\ref{sec:multi-unit-budget}): There are $m$ homogeneous items. Each buyer is associated with an arbitrary valuation function over the number of items, parameterized by his type. Each buyer can also have a hard (possibly private) budget constraint, and we seek ex-post IR mechanisms.
    %Our mechanism either charges a buyer $0$ or the min of his valuation of the allocated set and his budget.

\item {\em Non-linear utility functions} (Sec.~\ref{sec:non-linear}): We consider the multi-unit auction setting when the private utility of a buyer is an arbitrary function of allocation and payment, which is parametrized by a discrete type. As special cases, we consider the quitting rights model~\cite{cj} described above, as well as the setting where an agent can borrow money at a cost~\cite{DHW11}.  Finally, we allow the seller to have utilities on payments received~\cite{EF99,BCK12}. Our setting allows positive transfers, and do not need individual rationality as a constraint for optimality (both of which can raise more utility for a risk-averse seller).

\item {\em Multi-item auctions} (Appendix~\ref{sec:multi-item}): There are $m$ divisible items (for instance, ad categories in an adword auction). Each buyer's type is multi-dimensional but additive across items. Given any vector of revealed types, the allocations and payments must satisfy an arbitrary set of polyhedral constraints. As before, each buyer can have (potentially private) hard budget constraint, and we require IR to hold ex-post.

In the absence of other constraints on allocations and payments, we enforce ex-post envy-freeness constraints where a buyer should not derive more utility from the allocation and payment made to another buyer.

\item {\em Procurement auctions} (Appendix~\ref{sec:budget-feasible}): The agents have private costs on procuring items. The utility of the auctioneer is linear in the subset of items he procures. The auctioneer has a budget constraint on payments. This is a special case of {\em budget feasible mechanisms}~\cite{Singer,Chen} with linear utility functions.
\end{itemize}

We note that all results in this work easily extend when the distributions are correlated, as long as there is black box access to samples from the joint type space. \fullversion{(See Appendix~\ref{sec:correlated} for details.)}\shortversion{ Details are deferred to the full version.} Finally, though it is not the main point of the paper, we also note that this approach also derives all recent results on optimal auction design with multi-dimensional type spaces and allocation constraints~\cite{CDW_EC12,AlaeiFHHM12,CDW_STOC12,CDW_FOCS12,CDW_SODA13} (see Section~\ref{sec:framework} and Appendix~\ref{sec:general_allocation_constraint}), albeit with pseudo-polynomial running times.

\paragraph{\bf Overview of the Technique} The general framework to derive auctions using the multiplicative weight method is summarized in Section~\ref{sec:framework}. A key ingredient of this framework is to explicitly encode the {\em mechanisms's action} (allocation and payment for each buyer) {\em in each realization}, and connect them to the set of {\em holistic variables} using  the {\em linearity of expectation} constraints.
%connecting the {\em expected utility} given the type of a buyer with the allocations and payments for a {\em sample} of revealed types of all buyers (when the utility function is linear, the expected utility can be represented using the expected allocation and the expected payment). This entails having variables for expected utility for each $\left<{\mbox{buyer, type}}\right>$ as well as for the allocations and payments for each revealed vector of all buyers' types.
The former are exponentially many in number, but are connected to the latter via polynomially many linearity of expectation constraints.
Our main insight is to use the multiplicative weight update (or Lagrangean) method
%~\cite{Y95,PST95,AHK05}
to decide feasibility of these linearity of expectation constraints, and show that the oracle sub-problem that this method requires has a very simple structure - it decouples into a linear program over holistic variables, and into a per sample (or type vector) optimization problem of a certain form over the mechanism's {\em action variables}.
Even though the latter is over exponentially many type vectors, it can be estimated by solving the per-sample optimization subproblem over polynomially many samples. The per sample optimization subproblem is problem-specific, yet it is an easy exercise to decide when this problem will admit to polynomial time solutions. Assuming it does, the framework yields a pseudo-polynomial time algorithm to computer a $\epsilon$-BIC mechanism that is an additive $\epsilon$ approximation to the optimal revenue.
 %of a certain form for exponentially many type vectors (or samples).
%Assuming the expected value of the oracle's objective can be estimated, this yields a randomized mechanism based on maximizing this objective.

We note, the classical multiplicative weight update framework~\cite{AHK05} looks at the per-constraint violation by the {\em average-of-all-rounds} solution; we require (and establish) a slight generalization to this framework (see Section~\ref{sec:ahk}):
\begin{itemize}
\item The framework can handle different linear constraints in each round.
\item We consider the average violation of the $i$th constraint (for each $i$) by the solution to the oracle sub-problem for a uniformly chosen round. With this change, the framework no longer requires the underlying solution space to be convex.
\end{itemize}
%Even though, the oracle sub-problem is over exponentially many type vectors, it can be estimated by solving the per sample optimization problem for polynomially many samples.
%, yielding the final randomized mechanism.
In comparison with previous work, our optimization technique is based on the Lagrangean of an {\em explicitly written program} instead of trying to directly characterize the space of primal interim allocations rules.
%When utility functions are linear, and in the absence of ex-post payment constraints, our approach presents a unified view of the methods in~\cite{CDW_EC12,AlaeiFHHM12,CDW_STOC12,CDW_FOCS12,CDW_SODA13}. This is because we base our optimization on the Lagrangean of an {\em explicitly written program} instead of trying to directly characterize the space of primal interim allocations rules.
%See e.g. Appendix~\ref{sec:general_allocation_constraint} for a sketch of how our technique is applied to problems in~\cite{CDW_SODA13}.
%, and Section~\ref{sec:framework} for the general auction design framework.
%The key advantages of our approach are two-fold:
%\begin{itemize}
%\item The per-sample constraints can be richer and treats payment constraints on the same footing as allocation constraints (for instance, Appendix~\ref{sec:multi-item} and \ref{sec:budget-feasible}).
%\item It can handle various well studied non-linear utility functions.
%\item It handles various types of LPs: For non-linear utility functions (Section~\ref{sec:non-linear}), we write an LP with variables for each allocation and payment value.
%\end{itemize}
%The multiplicative weight method reduces the time complexity to depend in single buyer parameters instead of the size of the joint type space, and can be used to handle constraints over non-convex spaces.
%As compared to Ellipsoid-based algorithms in previous work, 
Our running times are pseudo-polynomial in the ratio of largest to smallest marginal prior probabilities. We leave improving this dependence without sacrificing any BIC or optimality properties as future work.

\paragraph{\bf Related Work}
Though many auctions implemented in practice incorporate budget constraints (for instance, ad allocation auctions), most theoretical results in the absence of a Bayesian assumption show impossibility of designing auctions with specific properties. For instance, recent work~\cite{DLS09,GMP12} shows the impossibility of designing auctions that are Pareto-optimal in terms of welfare even in multi-unit settings. Furthermore, Nisan~\cite{nisan} show that competitive equilibria need not exist even when agents have additive valuations for items. The lone exceptions are positive results for simple multi-unit settings~\cite{DLS09,GMP12,BCMX10,abrams} with unlimited supply of the item. In view of these impossibility results, it is more natural to consider the Bayesian setting; however all results so far~\cite{che1,che2,laffont,border,vohra,CDW_EC12,CDW_FOCS12} only satisfy individual rationality in expectation over the priors. We present (what we believe are) the first efficient algorithms for designing optimal auctions that satisfy individual rationality {\em ex-post} in a wide variety of situations.

We finally note that there has been a sequence of recent work on designing approximation algorithms for auctions in the Bayesian setting. For instance, the results in~\cite{CHMS10,BGGM10,BCMX10,ChawlaMM11} and subsequent papers show constant factor approximations to optimal auctions under a stronger model called dominant strategy incentive compatibility; the results in~\cite{Singer,Chen} show constant factor approximations to the welfare of procurement auctions subject to a budget constraint (what they term {\em budget feasible mechanisms}); and the results in~\cite{BCK12} considers the seller's as well as buyers' risk-aversion. In this paper, our focus is on designing optimal (or near-optimal) auctions and not on approximation algorithms.

\section{Preliminaries}
\label{sec:prelim}
We describe notations used in the paper: There are $n$ buyers and a risk-neutral seller. The type space for buyer $i$ is $T_i$, and we denote the individual type of buyer $i$ by $t_i$.
We use $\vec{t}$ to indicate the vector of reported types of all buyers in a realization of the mechanism. We also abuse the notation slightly to use $t_i$ to indicate buyer $i$'s type in $\vec{t}$. The seller maintains a prior distribution over the types of the buyers. The probability distribution of buyer $i$ over its type is given by $\DD_i$ and we use $\DD$ to indicate the joint distribution over types of all buyers.  We use $\mu(\vec{t})$ to denote the density of type vector $\vec{t}$, and $f_i(t_i)$ to denote the probability of type $t_i$ for buyer $i$. For exposition of our results, we assume independence across buyers; however our results easily extend when buyers' distributions are correlated \fullversion{(Appendix~\ref{sec:correlated})}\shortversion{(details deferred to the full version)}. Further, for each auctions setting studied in this work, we use $\OPT$ to denote the expected revenue (or the utility) of the optimal mechanism under the same set of constraints.

\subsection{Bayesian Incentive Compatibility}
A mechanism is a mapping from a vector of revealed types (or bids) to a (possibly randomized) set of allocations and payments for each buyer. For most of this paper, we assume the utility of a buyer is {\em linear}: For any allocation made and payment charged to the buyer, his utility is the difference between the welfare from the allocation and the payment.

\paragraph{\bf Bayesian (ex-interim) Incentive Compatibility} In this model of incentive compatibility, we assume the prior is common knowledge, and buyers are interested in maximizing their expected utility, where the expectation is over the types of the other buyers drawn from their respective priors. More formally, let $U_i(t'_i,t_i)$ denote the expected utility for buyer $i$ when he reveals type $t'_i$ and his true type is $t_i$, where this expectation is over the types of the other buyers  (drawn from their prior distributions) and the randomness in the mechanism. A Bayesian Incentive Compatible (BIC) mechanism has to enforce that:
\begin{description}
\item[Incentive Compatibility (IC):] $U_i(t_i,t_i) \ge U_i(t'_i,t_i)$ for every pair of types $t_i,t'_i\in T_i$, and
\item[Individual Rationality (IR):]  $U_i(t_i,t_i) \ge 0$ for all types $t_i\in T_i$.
\end{description}
A mechanism is said to be $\epsilon$-BIC if it enforces $U_i(t_i,t_i) \ge U_i(t'_i,t_i) - \epsilon$ for every pair of types $t_i,t'_i\in T_i$.

\paragraph{\bf Robust (ex-post) Individual Rationality} The BIC constraint only requires the allocations and payments to satisfy IR ex-interim. In this paper, we consider a sub-class of mechanisms that satisfy a more robust (ex-post) notion of IR: If buyer $i$ reveals type $t_i$, then its allocation and payment should yield non-negative utility for the reported type {\em in each outcome} of the mechanism.
%, i.e. for each type vector of other buyers and the randomness introduced by the mechanism.
%of the types of the other buyers and the randomness introduced by the mechanism.
We term this constraint {\em ex-post IR}.

\paragraph{\bf Budgets Constraints} Each buyer is associated with a budget constraint. The budgets can be either publicly known, or it is buyer's private information specified using his type. We denote buyer $i$'s budget by $B_i$, when budgets are public. For private budgets, we use $B_i(t)$ to denote buyer $i$'s budget when his type is $t$. Unless mentioned otherwise, we assume that a buyer's utility of paying more than his budget is $-\infty$.

%({\bf Add discussion about utility?})

\subsection{Problem Statements}
 In this work, we primarily focus on designing BIC mechanisms that maximize expected revenue and satisfy IR and budget constraints {\em ex-post.}
%We now describe the specific problems we consider.

\begin{itemize}
\item {\em Multi-unit Auctions (Section~\ref{sec:multi-unit-budget}):} The seller has $m$ identical items, each buyer $i$ is associated with a publicly known valuation function $v_i:\N\times T_i\rightarrow \R^{+}$; his valuation for receiving $k$ items is  $v_{i}(k,t_i)$ when his type is $t_i$. We assume that $v_i(k,t_i)\ge 0$ for each $k\ge 0, i, t_i\in T_i$, and $v_i(0,t_i)=0$ for each $i,t_i\in T_i$. We note, these valuations easily capture scenarios where each buyer wants at most one item.

\item
{\em Non-linear Utility (Section~\ref{sec:non-linear}):} Various multi-unit auction settings that involves buyers as well as the seller with non-linear utility functions are illustrated in Section~\ref{sec:non-linear}.

\item {\em Multi-item Auctions with Divisible Items (Appendix~\ref{sec:multi-item}):} The seller offers $m$ distinct types of divisible items. We assume that buyers have additive valuations: if buyer $i$  of type $t$ receives $x_{ij}$ amount of item $j$, for each $j$, then his total valuation of the allocation is $\sum_{j} x_{ij}v_{ij}(t)$. Further, in addition to budget and ex-post IR constraints, we are given a polyope $\P$ over the set of feasible allocation and payments in a realization of the mechanism.

\item {\em Budget Feasible Mechanisms:} We describe the setting as well as our result in Appendix~\ref{sec:budget-feasible}.
\end{itemize}
%In this problem, the auctioneer has a (hard) budget of $B$. There are $n$ agents, and each agent produces a different type of item. Each agent is associated with a private cost to produce the item; specifically, agent $i$'s cost of producing his item is $c_i(t)$ where $t$ is his type. The auctioneer's valuation of set of items $S$ is given by a $\G(S) =\sum_{j\in S}v_j$, {\em i.e.}, linear in set $S$.
%The auctioneer's total payment to agents in any outcome is bounded by $B$.
%The objective is to design a BIC mechanism that maximizes auctioneer's expected valuation of procured items such that (a) the auctioneer never pays more than $B$ in any realization, and (b) the mechanism is ex-post IR for each agent, i.e. if the item from agent $i$ of type $t$ is procured, then he is paid at least $c_i(t)$.

\paragraph{\bf Running Times}
We are given a parameter $L$, that for any single buyer, bounds the absolute value of (a) his payment, (b) his valuation for any set of items, and (c) his utility (for any allocation and payment).
%, when the utility function is not-linear) in any realization of the mechanism.
Further, we assume that $f_i(t)\ge 1/L$ for each $i,t\in T_i$. Our algorithms produce $\epsilon$-BIC mechanisms that are additive $\epsilon$ approximation to revenue in time polynomial in $n$ (\#buyers), $m$ (\#items),  $\max_i |T_i|$, and the parameters $L$ and $1/\epsilon$.
%\footnote{In the full paper, we show that the running time can be made polynomial in $\log L$ if we enforce $(\epsilon v_{\max})$-BIC with high probability, where $v_{\max}$ is the maximum valuation.} %This running time is pseudo-polynomial;  we show in Section~\ref{sec:rt} that we can make the running time polynomial if we define a slightly more relaxed notion of $\epsilon$-BIC and optimality, in particular, if we enforce $(\epsilon v_{\max})$-BIC with high probability, and insist on additive $\epsilon v_{\max}$ approximation to revenue, where $v_{\max}$ is the maximum valuation.

\newcommand{\D}{\mathcal{D}}
\section{The Multiplicative Weights Update Framework}\label{sec:ahk}
In this section, we present a slight generalization to the multiplicative weights update framework to decide the feasibility of a set of linear constraints over arbitrary spaces. We note its difference compared to the classical framework:
\begin{itemize}
\item The framework can (and needs to) handle different linear constraints in each round.
\item We consider the average violation of the $i$th constraint (for each $i$) by the solution to the oracle sub-problem for a uniformly chosen round (instead of the violation by the average solution of all rounds). With this change, the framework no longer requires the underlying solution space to be convex (Observation~\ref{obs:ahk}).
\end{itemize}
We next illustrate the  framework using the exposition and bounds due to \cite{AHK05}.
%due to Young~\cite{Y95}, and Plotkin, Shmoys, and Tardos~\cite{PST95};
We first define the generic problem to decide the feasibility of linear constraints: Given an arbitrary set $P\in \mathbf{R}^s$, and an $r \times s$ matrix $A$,

\begin{center}
\fbox{\begin{minipage}{2.6in}
{\sc LP}$(A,b,P)$:  $\exists x \in P$ such that $A x \ge b$?
\end{minipage}}
\end{center}

The running time of the algorithm is quantified in terms of the {\sc Width} defined as:
$$\rho = \max_i \max_{x \in {P}} |a_i x - b_i|$$
The algorithm assumes an efficient oracle of the form: Let $y\ge 0$ be an $r$ dimensional dual vector.

\begin{center}
\fbox{\begin{minipage}{3.5in}
{{\sc Oracle}$(A, y)$: Solve $C(A, y) = \max\{ y^t A z : z \in P\}$.}
\end{minipage}}
\end{center}

We now describe a generalization of the Arora-Hazan-Kale (AHK) procedure that allows {\em the matrix $A$ and vector $b$ to be different in each step}, albeit with the same number of rows $r$, and same width bound $\rho$. Denote the matrix at time $t$ to be $A_t$, and its $i^{th}$ row by $a_{it}$; denote the vector $b$ at time $t$ to be $b_t$.

\begin{algorithm}
\caption{Generalized AHK Algorithm}
\begin{algorithmic}
\STATE Let $K \leftarrow  \frac{4\rho^2 \log r}{\epsilon^2}$; $y_1 = \vec{1}$
\FOR{$t = 1$ {\bf to} $K$}
\STATE Find $x_t$ using {\sc Oracle}$(A_t, y_t)$.
\IF{$C(A_t, y_t) < y_t^T b_t$}
\STATE Declare {\sc LP}$(A_t,b_t,P)$ infeasible and terminate.
\ENDIF
\STATE {\bf For} $i = 1,2,\ldots,r$, set $M_{it} = a_{it} x_t - b_{it}$ .
\STATE Update $y_{it+1}$ for $i = 1,2,\ldots,r$ as follows: \\
$\qquad y_{it+1} \leftarrow y_{it}(1 - \epsilon)^{M_{it}/\rho}$ if  $M_{it} \ge 0$. \\
$\qquad y_{it+1} \leftarrow y_{it}(1 + \epsilon)^{-M_{it}/\rho}$ if  $M_{it} < 0$. \\
\STATE {\em Optional step:} Normalize $y_{t+1}$ to have unit $l_1$-norm. \\
\ENDFOR
\STATE {\bf Return} $(x_1, x_2, \ldots, x_K)$.
\end{algorithmic}
\end{algorithm}

\junk{
\begin{center}
\fbox{\begin{minipage}{3.25in}
\begin{tabbing}
{\sc Generalized AHK Algorithm} \\
Let $T \leftarrow  \frac{4\rho^2 \log r}{\epsilon^2}$; $y_1 = \vec{1}$ \\
{\bf For} $t = 1$ {\bf to} $K$ {\bf do}: \\
\ \ \ \=  Find $x_t$ using {\sc Oracle}$(A_t, y_t)$. \\
\> If $C(A_t, y_t) < y_t^T b_t$, then declare {\sc LP}$(A_t,b_t,P)$ infeasible and terminate. \\
\> Set $M_{it} = a_{it} x_t - b_{it}$ for $i = 1,2,\ldots,r$. \\
\> Update $y_{it+1}$ for $i = 1,2,\ldots,r$ as follows: \\
\> \ \ \ \= $y_{it+1} \leftarrow y_{it}(1 - \epsilon)^{M_{it}/\rho}$ if  $M_{it} \ge 0$. \\
\> \>  $y_{it+1} \leftarrow y_{it}(1 + \epsilon)^{-M_{it}/\rho}$ if  $M_{it} < 0$. \\
\> \> {\em Optional step:} Normalize $y_{t+1}$ to have unit $l_1$-norm. \\
{\bf Return} $(x_1, x_2, \ldots, x_K)$.
\end{tabbing}
\end{minipage}}
\end{center}
}
The following theorem follows by simply re-working the analysis in~\cite{AHK05}, we omit its proof.
\begin{theorem}
\label{thm:ahk}
%The Generalized AHK procedure as described above has the following guarantee:
If for all $t$,  {\sc LP}$(A_t,b_t,P)$ is feasible, the procedure never declares infeasibility, and the sequence $(x_1, x_2, \ldots, x_K)$ satisfies:
$$ \frac{\sum_t (a_{it} x_t - b_{it})}{K}  + \epsilon \ge 0 \qquad \forall i$$
\end{theorem}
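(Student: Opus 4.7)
The plan is to emulate the standard multiplicative-weights potential-function analysis of~\cite{AHK05}, observing that the only place where the constraint matrix or right-hand side enter the argument is a round-local oracle inequality, which can be produced afresh at each $t$. First I would set $\Phi_t = \sum_{i=1}^r y_{it}$, so $\Phi_1 = r$, and establish the one-step decrease. Because $M_{it}/\rho \in [-1,1]$ by the definition of $\rho$, convexity of $(1-\epsilon)^x$ and $(1+\epsilon)^x$ on $[0,1]$ yields the chord bounds $(1-\epsilon)^{M_{it}/\rho} \le 1 - \epsilon M_{it}/\rho$ when $M_{it} \ge 0$ and $(1+\epsilon)^{-M_{it}/\rho} \le 1 - \epsilon M_{it}/\rho$ when $M_{it} < 0$. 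Either way, $y_{i,t+1} \le y_{it}(1 - \epsilon M_{it}/\rho)$; summing over $i$ gives $\Phi_{t+1} \le \Phi_t - (\epsilon/\rho)\langle y_t, M_t\rangle$.

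Next I would dispose of the non-termination clause. If $\mathrm{LP}(A_t,b_t,P)$ is feasible with witness $z$, then since $y_t \ge 0$ componentwise we have $y_t^T A_t z \ge y_t^T b_t$; the oracle value $C(A_t,y_t)$ is the maximum of $y_t^T A_t x$ over $x \in P$ and hence dominates this witness, so $C(A_t,y_t) \ge y_t^T b_t$. The algorithm therefore never declares infeasibility, and $\langle y_t, M_t\rangle = C(A_t,y_t) - y_t^T b_t \ge 0$ for every $t$. Feeding this back into the one-step bound gives $\Phi_{t+1} \le \Phi_t$ at each round, so in particular $\Phi_{K+1} \le \Phi_1 = r$.

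For the per-constraint lower bound on $S_i := \sum_t M_{it}$, I would fix $i$ and split by sign: let $S^+_i = \sum_{t: M_{it} \ge 0} M_{it}$ and $S^-_i = \sum_{t: M_{it} < 0} |M_{it}|$, so that $S_i = S^+_i - S^-_i$ and $S^+_i + S^-_i \le K\rho$ by the width bound. Unrolling the multiplicative update,
\[\ln y_{i,K+1} \;=\; \frac{S^+_i}{\rho}\ln(1-\epsilon) \;+\; \frac{S^-_i}{\rho}\ln(1+\epsilon),\]
and the quadratic Taylor bounds $\ln(1-\epsilon) \ge -\epsilon - \epsilon^2$ and $\ln(1+\epsilon) \ge \epsilon - \epsilon^2$ (valid for $\epsilon \in (0,1/2]$) lower-bound the right-hand side by $-(\epsilon/\rho) S_i - (\epsilon^2/\rho)(S^+_i + S^-_i)$. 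Combining with $y_{i,K+1} \le \Phi_{K+1} \le r$ and using $S^+_i + S^-_i \le K\rho$ to dispatch the quadratic error term gives $S_i \ge -(\rho \ln r)/\epsilon - O(\epsilon \rho K)$; dividing by $K = \Theta(\rho^2 \log r/\epsilon^2)$ and balancing the two terms delivers the claimed bound $S_i/K + \epsilon \ge 0$.

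I expect no real obstacle. The sole departure from the classical analysis is that $(A_t, b_t)$ varies across rounds, but this is harmless: the only oracle-level fact the argument above uses is the per-round inequality $\langle y_t, M_t\rangle \ge 0$, and at no point do we form an \emph{average decision} $\bar x = \frac{1}{K}\sum_t x_t$. That absence is precisely why convexity of $P$ can be dropped from the hypotheses, which was the whole point of the generalization. The only mildly fiddly step is balancing the two error terms in the last paragraph, but this is routine once $K$ is chosen as $\Theta(\rho^2 \log r/\epsilon^2)$.
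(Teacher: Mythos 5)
Your overall strategy is the right one and faithfully reproduces the AHK machinery: the potential $\Phi_t = \sum_i y_{it}$ with $\Phi_1 = r$, the chord bound $y_{i,t+1} \le y_{it}(1 - \epsilon M_{it}/\rho)$ from convexity of $a^x$ on $[-1,1]$, the observation that feasibility forces $C(A_t,y_t) \ge y_t^T b_t$ so that $\langle y_t, M_t\rangle \ge 0$ and $\Phi$ is non-increasing, and the per-coordinate unrolling $\ln y_{i,K+1} = \frac{S^+_i}{\rho}\ln(1-\epsilon) + \frac{S^-_i}{\rho}\ln(1+\epsilon)$. You also correctly identify that nothing in the argument needs $(A_t,b_t)$ to be stationary or $P$ to be convex, which is indeed the whole point of the generalization.

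The one step that does not close as written is the concluding ``balancing.'' Chasing your inequality through with the paper's literal parameters gives
\[
\frac{S_i}{K} \;\ge\; -\frac{\rho\ln r}{\epsilon K} \;-\; \epsilon\rho
\;=\; -\frac{\epsilon}{4\rho} \;-\; \epsilon\rho
\]
once $K = 4\rho^2\log r/\epsilon^2$ is substituted. For $\rho > 1$ the second term $\epsilon\rho$ already exceeds $\epsilon$, and no choice of $K$ makes it shrink, since it comes from the update rate rather than the horizon. So the two terms cannot be ``balanced'' to $-\epsilon$ as you assert. The fix — and this is what AHK actually do — is that the multiplicative update rate must be $\eta = \epsilon/(2\rho)$ rather than $\epsilon$: use $(1-\eta)^{M_{it}/\rho}$ and $(1+\eta)^{-M_{it}/\rho}$. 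Re-running your exact calculation with $\eta$ in place of $\epsilon$ yields $S_i/K \ge -\rho\ln r/(\eta K) - \eta\rho = -\epsilon/2 - \epsilon/2 = -\epsilon$ at the same $K$, and the non-termination and monotone-potential steps go through verbatim. You should make this reparametrization explicit; as written, the last paragraph claims a conclusion your bound does not deliver when $\rho$ is large (which it is in all the paper's applications, e.g.\ $\rho = mL$ in Section~\ref{sec:multi-unit-budget}).
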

%We note an important observation.
%\begin{observation}
%When $x^*$ is chosen at random from $\{x_1, x_2, \ldots, x_K\}$, then each constraint is violated by $\epsilon$ in expectation.
%{\em even when $P$ is not convex, the procedure is well-defined and we can interpret $x^*$ as choosing one of the $x_t$ uniformly at random from $\{x_1, x_2, \ldots, x_K\}$}.
%\end{observation}
\begin{observation}\label{obs:ahk}
If $P$ is convex and $(A_t, b_t)$ are the same for all $t$, then $x^* = (\sum_t x_t)/K$ is a feasible solution with $A x^* \ge b - \epsilon \vec{1}$. Further, even if $P$ is not convex, the procedure is well-defined and we can interpret $x^*$ as choosing one of the $x_t$ uniformly at random from $\{x_1, x_2, \ldots, x_K\}$.
\end{observation}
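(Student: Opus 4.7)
The observation is essentially a direct corollary of Theorem~\ref{thm:ahk}, and my plan is to derive both parts by instantiating the theorem in the case where $(A_t, b_t) = (A, b)$ does not vary with $t$ and then pushing the averaging bound either into $P$ (when it is convex) or into an expectation (when it is not).

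First, I would apply Theorem~\ref{thm:ahk} to the constant-constraint sequence: for each row $i$, the guarantee reads $\tfrac{1}{K}\sum_{t=1}^K (a_i x_t - b_i) + \epsilon \ge 0$, i.e.\ $a_i \bigl(\tfrac{1}{K}\sum_t x_t\bigr) \ge b_i - \epsilon$. Defining $x^* = \tfrac{1}{K}\sum_t x_t$ and collecting over $i$ yields $Ax^* \ge b - \epsilon \vec{1}$. The only thing left to conclude the first claim is membership in $P$: since each $x_t \in P$ and $x^*$ is a convex combination of these points, convexity of $P$ gives $x^* \in P$, so $x^*$ is itself the advertised $\epsilon$-feasible solution.

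For the second claim, the same inequality $a_i x^* \ge b_i - \epsilon$ holds regardless of whether $P$ is convex, because the oracle step is still well-defined (it only requires solving $\max\{y^T A z : z \in P\}$, which does not use convexity) and the update rule on $y_t$ is unchanged. What fails is that $x^*$ need no longer lie in $P$. To recover a meaningful object, I would re-interpret $x^*$ as $\mathbf{E}[X]$, where $X$ is drawn uniformly from $\{x_1,\dots,x_K\}$. Linearity of expectation then gives $\mathbf{E}[a_i X] = a_i x^* \ge b_i - \epsilon$ for every $i$, so the constraints hold in expectation over the randomized choice, while each realization of $X$ lies in $P$ by construction.

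The step that carries the most content is the first application of Theorem~\ref{thm:ahk}, and that is not something I would reprove here since the theorem statement is taken as given; the rest is essentially the distinction between averaging inside $P$ (which needs convexity) and averaging in the distributional sense over $P$ (which does not). The main conceptual point worth emphasizing in the write-up is that this distributional reinterpretation is exactly what makes the framework applicable to the non-convex action spaces that arise later in the paper when the per-sample subproblem is a discrete optimization over exponentially many mechanism actions.
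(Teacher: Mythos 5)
The paper states Observation~\ref{obs:ahk} without proof, so there is no written argument to compare against; your proof supplies exactly the natural derivation and is correct. You instantiate Theorem~\ref{thm:ahk} with constant $(A_t, b_t)$, rewrite the per-row average as $a_i x^* \ge b_i - \epsilon$, and use convexity of $P$ to place $x^*$ in $P$; for the non-convex case you observe correctly that the algorithm never relies on convexity (the oracle only needs to maximize a linear form over $P$, and the weight updates are unchanged), and you recover feasibility ``in distribution'' by reading $x^*$ as the expectation of a uniform draw from $\{x_1,\ldots,x_K\}$, each of which lies in $P$. One small point worth flagging: the observation's second sentence is also implicitly invoked later in the paper in settings where $(A_t, b_t)$ vary (because each round's constraints come from fresh samples). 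In that regime the clean statement $Ax^* \ge b - \epsilon\vec{1}$ does not literally apply; instead Theorem~\ref{thm:ahk} gives $\frac{1}{K}\sum_t (a_{it}x_t - b_{it}) \ge -\epsilon$, which under your distributional reading becomes $\E[a_{it^*}X - b_{it^*}] \ge -\epsilon$ for a uniformly random index $t^*$. Your argument extends verbatim to this case once restated that way, but it would strengthen the write-up to note this explicitly, since it is the form the paper actually uses in Section~\ref{sec:rt}.
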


\section{Multi-unit Auctions}\label{sec:multi-unit-budget}
In this section, we describe a framework for designing mechanisms by the multiplicative weight update method using multi-unit auctions as an example. We generalize this example to a structural framework in Section~\ref{sec:framework}, and present more instantiations of this framework in later sections. We begin by the case where valuations of different buyers are independent, the only feasibility constraint is on the supply of items, and i.e. at most $m$ items can be allocated, each buyer is associated with a public budget constraint, and the objective is to design an $\epsilon$-BIC mechanism that is (a) feasible, (b) ex-post IR, and (c) satisfies the budget constraints in each realization. The technique to handle private budgets is explained in Section~\ref{sec:private-budget}.

%In Section~\ref{sec:correlated}, we handle correlated distributions. %In Section~\ref{secIn Section~\ref{sec:non-linear}, we show how to extend our techniques when the seller's utility function is non-linear, and buyers have arbitrary utility function over allocation and payment; this includes scenarios where the buyers and the seller are risk averse.
The following theorem states our result for multi-unit auctions:
\begin{theorem}\label{thm:budgets}
For multi-unit auctions with per buyer budget constraints, there exists a polynomial time algorithm to compute a mechanism that is ex-post IR, $\epsilon$-BIC, and is an $\epsilon$-approximation to the revenue of the optimal mechanism (under the same set of constraints).
\end{theorem}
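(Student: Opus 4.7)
My plan is to instantiate the generalized AHK framework (Theorem~\ref{thm:ahk}) on an explicit LP whose only variables are the mechanism's ex-post action $\mathcal{A}_{\vec{t}}$---a distribution over allocation/payment pairs $(\vec{k},\vec{p})$---for every reported type profile $\vec{t}$. The ambient set $P$ is the collection of all $\{\mathcal{A}_{\vec{t}}\}_{\vec{t}}$ whose supports contain only tuples with $\sum_i k_i\le m$, $p_i\le B_i$, and $v_i(k_i,t_i)-p_i\ge 0$, so that supply feasibility, budgets, and ex-post IR are absorbed into $P$. The rows of $Ax\ge b$ are (i) the $\epsilon$-BIC inequalities $U_i(\tilde{t}_i,\tilde{t}_i)-U_i(t_i,\tilde{t}_i)+\epsilon\ge 0$ for every $(i,\tilde{t}_i,t_i)$, and (ii) a single revenue lower bound $\sum_{\vec{t}}\mu(\vec{t})\,\E_{\mathcal{A}_{\vec{t}}}[\sum_i p_i]\ge R$; a binary search on $R$ over $[0,nL]$ with granularity $\epsilon$ reduces revenue maximization to LP feasibility.

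The core step is an efficient oracle that, given duals $y=(\lambda_0,\{\lambda_{i,\tilde{t}_i,t_i}\})$, maximizes $y^TAz$ over $z\in P$. Expanding each $U_i$ by linearity of expectation and re-indexing so that $\vec{t}$ always denotes the reported profile, the objective decouples across $\vec{t}$ (modulo a constant absorbing the $+\epsilon$ slacks) as $\sum_{\vec{t}}\mu(\vec{t})\,\E_{\mathcal{A}_{\vec{t}}}[\sum_i \phi_i(t_i,k_i,p_i;y)]$, where
\begin{equation*}
\phi_i(t_i,k_i,p_i;y)=\lambda_0 p_i + \frac{1}{f_i(t_i)}\sum_{s\in T_i}\bigl(\lambda_{i,t_i,s}(v_i(k_i,t_i)-p_i)-\lambda_{i,s,t_i}(v_i(k_i,s)-p_i)\bigr)
\end{equation*}
and $t_i$ is the $i$-th coordinate of $\vec{t}$. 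At each $\vec{t}$ the inner problem is linear in $\mathcal{A}_{\vec{t}}$, so its maximum is attained at a point mass $(\vec{k}^\star,\vec{p}^\star)$; since each $\phi_i$ is linear in $p_i$, the optimal payment lies at one of the endpoints $\{0,\min(B_i,v_i(k_i,t_i))\}$, and choosing $\vec{k}$ with $\sum_i k_i\le m$ then becomes a knapsack-style DP solvable in $\mathrm{poly}(n,m,\max_i|T_i|)$ time.

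The outer sum over $\vec{t}$ has exponentially many terms, so in each MWU round I estimate the oracle objective and the per-constraint slacks $M_{it}=a_{it}x_t-b_{it}$ by averaging the per-sample oracle over $\mathrm{poly}(nL/\epsilon)$ fresh samples from $\mu$; each BIC slack $(j,\tilde{t}_j,t_j)$ additionally needs samples of $\vec{t}_{-j}$ to estimate $U_j(\tilde{t}_j,\tilde{t}_j)$ and $U_j(t_j,\tilde{t}_j)$ under the implicit round-$t$ mechanism. A Hoeffding bound together with a union bound across $K$ rounds and the $O(n\max_i|T_i|^2)$ constraints keeps every estimate within $\epsilon/2$. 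Because each BIC slack lies in $[-O(L),O(L)]$ and the revenue slack in $[-nL,nL]$, the width is $\rho=O(nL)$, so Theorem~\ref{thm:ahk} runs in $K=\mathrm{poly}(n,L,1/\epsilon,\log\max_i|T_i|)$ rounds. I store the final mechanism implicitly by the dual sequence $(y_1,\dots,y_K)$: on input $\vec{t}$, pick $\tau\in[K]$ uniformly at random and return the oracle's action for $(\vec{t},y_\tau)$. By Observation~\ref{obs:ahk} this mixture lies in $P$---hence is automatically supply-feasible, ex-post IR, and budget-respecting---and satisfies $Ax^\star\ge b-O(\epsilon)\vec{1}$, giving $\epsilon$-BIC together with revenue at least $R-O(\epsilon)$; binary searching $R$ and rescaling $\epsilon$ by a constant completes the proof.

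The main obstacle is tractability of the oracle despite a continuous payment variable and the exponentially large ex-post outcome space at each $\vec{t}$; my resolution is that $\phi_i$ is linear in $p_i$ (forcing payments to endpoints) and additively separable across buyers once $\vec{t}$ is fixed, so the allocation side collapses to a polynomial-size knapsack DP. Propagating sampling error through the MWU updates into the final BIC and revenue guarantees is the only other moving part, handled by a routine concentration plus union-bound argument that does not enlarge the running time beyond polynomial.
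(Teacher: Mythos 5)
Your proposal is correct but organizes the Lagrangian decomposition differently from the paper, and the difference is genuine enough to spell out. The paper introduces auxiliary ``holistic'' variables $X_{iq}(t_i), P_i(t_i)$, runs the multiplicative-weight update on the \emph{linearity-of-expectation equalities} $\expert$ that tie them to the per-profile action variables, and places BIC, revenue, and $\F(\vec{t})$ together into the oracle's feasible region $P$; consequently the paper's oracle splits into two pieces, a polynomial-size LP $\tt LP_{exp}$ over the holistic variables plus the per-profile subproblem $\WMP$, and its duals are signed because the $\expert$ constraints are equalities. You instead eliminate the holistic variables by substitution, put only the per-profile feasibility $\F(\vec{t})$ into $P$, and run the MWU directly on the BIC and revenue inequalities; your oracle is then a single per-profile maximization with all Lagrange multipliers nonnegative, and your expansion of $\sum_{(i,\tilde t_i,t_i)}\lambda_{i,\tilde t_i,t_i}\bigl(U_i(\tilde t_i,\tilde t_i)-U_i(t_i,\tilde t_i)\bigr)+\lambda_0\cdot\mathrm{rev}$ into the per-profile, per-buyer functional $\phi_i$ is a correct re-indexing, linear in $p_i$, so the endpoint-payment plus knapsack DP is exactly the analogue of the paper's $A[i,k]$ recursion. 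Both decompositions certify the same feasibility question and yield an $\epsilon$-BIC mechanism with revenue $\OPT-\epsilon$; yours is a bit leaner (no $\tt LP_{exp}$, nonnegative duals) at the cost of a somewhat wider revenue row (width $O(nL)$ versus the paper's per-expert width $O(mL)$), which still keeps $K$ polynomial after rescaling. The sampling step is the same concentration-plus-union-bound argument the paper proves as Lemma~\ref{lem:whp}, and your implicit mechanism ``pick $\tau\in[K]$ uniformly, run the oracle at $(\vec t, y_\tau)$'' is exactly the paper's $\M(\vec\alpha[1\ldots K],\vec\beta[1\ldots K])$; the only small omission worth flagging is that the per-profile oracle must break ties deterministically (the paper requires a stateless implementation) so that the round-$\tau$ mechanism is well-defined when evaluated on fresh samples.
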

In the rest of the section, we prove Theorem~\ref{thm:budgets}. Consider a realization of the mechanism, let $\vec{t}$ be the reported type vector of all buyers. In the realization, let $x_{iq}(\vec{t})\in\{0, 1\}$ be an indicator variable indicating whether buyer $i$ is allocated $q\in \{0,1,\ldots,m\}$ items when the reported type vector is $\vec{t}$. Let $p_{i}(\vec{t})$ be the payment made by buyer $i$. As vector $\vec{x}(\vec{t}), \vec{p}(\vec{t})$ must be feasible, it should satisfy the constraints $\F(\vec{t})$ shown below.

\[ \begin{array}{rcll}
\sum_{i} qx_{i,q}(\vec{t}) & \le & m & \\
\sum_{q} v_i(q, t_i)x_{iq}(\vec{t}) & \ge & p_{i}(\vec{t}) & \forall i,q\\
x_{iq}(\vec{t}) &\in& \{0,1\} &\forall i,q\\
\sum_q x_{iq} &=& 1 & \forall i \\
p_{i}(\vec{t}) &\in& [0, B_i] &\forall i
\end{array}\]
Let $X_{iq}(t_i)$ be the probability that buyer $i$ is allocated $q$ items when his reported type is $t_i$, and let $P_i(t_i)$ be his expected payment when his type is $t_i$. This probability is over the randomness due to the other buyers' valuations, and due to the randomness in the mechanism. We get
\begin{align}
\label{eq1} f_i(t_i) X_{iq}(t_i) &= \sum_{\vec{t} | t_i \in \vec{t}} \mu(\vec{t}) x_{iq}(\vec{t}) &\forall i, q,t_i\in T_i \\
\label{eq2} f_i(t_i) P_i(t_i) &= \sum_{\vec{t} | t_i \in \vec{t}} \mu(\vec{t}) p_{i}(\vec{t}) &\forall i, t_i\in T_i
\end{align}
The BIC constraint is the following: $\forall i, t_i,t'_i\in T_i$,
\begin{equation}
\label{eq3}\sum_q v_i(q, t_i)X_{iq}(t_i) - P_i(t_i) \ge \sum_q v_i(q, t_i)X_{iq}(t'_i) -P_i(t'_{i})
\end{equation}
Further, we add a constraint to check whether revenue $\OPT$ is feasible. This can be written as the constraint:
\begin{equation}
\label{eq4} \sum_{i,t_i\in T_i} f_i(t_i)  P_i(t_i) \ge  \OPT
\end{equation}

\paragraph{\bf Randomized Mechanisms} An astute reader will object that the set $\F(\vec{t})$ only captures the space of {\em deterministic} mechanisms and {\em an optimal mechanism can take a randomized action for type vector $\vec{t}$}. To capture randomized mechanisms, we have to replace $\F(\vec{t})$ with the {\em convex hull} of the set $\F(\vec{t})$. However, this will only be a technicality - the multiplicative weight update framework described below optimizes a linear function of $\{x(\vec{t}), p(\vec{t})\}$ over this set, and this optimization problem is clearly the same whether we use $\F(\vec{t})$ or its convex hull. Since the same observation can be made of all the problems considered in this paper, we will  use $\F(\vec{t})$ for expositional simplicity.

\subsection{Structural Characterization}
We define the set of feasibility constraints $\C$ as the set of constraints $\F(\vec{t})$ for all type vectors $\vec{t}$, combined with the BIC constraint (\ref{eq3}) and the revenue constraint Eq (\ref{eq4}). Consider deciding feasibility of Equations~\ref{eq1} and~\ref{eq2} subject to the constraints $\C$. This is equivalent to deciding the feasibility of following constraints, referred to as $\expert$:
\[ \begin{array}{rcll}
 X_{iq}(t_i) &=& \sum_{\vec{t} | t_i \in \vec{t}} \mu(\vec{t}) x_{iq}(\vec{t})/ f_i(t_i) &\forall i, q,t_i\in T_i \\
% X_{i}(t_i) \ge \sum_{\vec{t} | t \in \vec{t}} \mu(\vec{t}) x_{i}(\vec{t})/ f_i(t_i) \qquad \forall i, t_i\in T_i \\
 P_i(t_i) &=& \sum_{\vec{t} | t_i \in \vec{t}} \mu(\vec{t}) p_{i}(\vec{t})/ f_i(t_i) &\forall i, t_i\in T_i
% P_i(t_i) \ge \sum_{\vec{t} | t \in \vec{t}} \mu(\vec{t}) p_{i}(\vec{t})/ f_i(t_i) \qquad \forall i, t_i\in T_i
\end{array}\]
Using the multiplicative weight method, the feasibility problem reduces to a sequence of {\sc Oracle}$(\vec{\alpha}, \vec{\beta})$ problems, where $\alpha_{iqt_i}, \beta_{it_i}\ (1\le i\le n, q\in \{0,1,\ldots,m\}, t_i\in T_i)$ are the dual multipliers which could be positive or negative (as the constraints are equality constraints). {\sc Oracle}$(\vec{\alpha}, \vec{\beta})$ asks us to solve, for $\vec{\alpha}, \vec{\beta}$:
\[\begin{array}{rl}
\mbox{Max } & -\sum_{i,t_i\in T_i}  f_i(t_i) \left(\sum_q\alpha_{iqt_i} X_{iq}(t_i)+ \beta_{it_i} P_i(t_i)\right) \\
 &+\sum_{\vec{t}} \mu(\vec{t}) \left(\sum_{i} \left(\sum_q(\alpha_{iqt_i} x_{iq}(\vec{t})) + \beta_{it_i}p_{i}(\vec{t})\right) \right)
\end{array}\]
subject to the following constraints:
\[ \begin{array}{rcll}
\sum_q v_{i}(q, t_i) X_{iq}(t_i) - P_i(t_i) &\ge&  \sum_q v_{i}(q, t_i)X_{iq}(t'_i) -P_{i}(t'_i) &\forall i, t_i,t'_i\in T_i\\
\sum_{i,t_i\in T_i} f_i(t_i) P_i(t_i) &\ge& \OPT \\
\{ x_{iq}(\vec{t}) \} & \in &  \F(\vec{t}) &  \forall \vec{t}
\end{array} \]

\paragraph{\bf Decoupling of the Oracle} Note that {\sc Oracle}$(\vec{\alpha}, \vec{\beta})$ separates into two optimization problems. The first is the poly-size linear program ${\tt LP_{exp}}$:
$$ \mbox{Minimize}  \sum_{i,t_i\in T_i}  f_i(t_i) \left(\sum_q\alpha_{iqt_i} X_{iq}(t_i) + \beta_{it_i} P_i(t_i)\right)$$
subject to the following constraints:
\[ \begin{array}{rcll}
\sum_q v_{i}(q, t_i) X_{iq}(t_i) - P_i(t_i) &\ge&  \sum_q v_{i}(q, t_i)X_{iq}(t'_i) -P_{i}(t'_i) &\forall i, t_i,t'_i\in T_i \\
  \sum_{i,t_i\in T_i} f_i(t_i) P_i(t_i) &\ge& \OPT \\
% X_{iq}(t_i) &\in& [0,1] &\forall i,q,t_i\in T_i\\
% P_i(t_i) &\in& [0,B_i]&\forall i,t_i\in T_i\\
\end{array}\]
The second is the problem $\WMP(\vec{\alpha}, \vec{\beta})$, which for every type vector $\vec{t}$, solves the following problem:
$$  \mbox{Maximize} \sum_{i}\left(\sum_q \alpha_{iqt_i} x_{iq}(\vec{t})\right) + \beta_{it_i}p_{i}(\vec{t}) $$
subject to the constraints $\F(\vec{t})$.
%\[ \begin{array}{rcll}
%\sum_{i} qx_{i,q}(\vec{t}) & \le & m & \\
%\sum_{q} v_i(q, t_i)x_{iq}(\vec{t}) & \ge & p_{i}(\vec{t}) & \forall i,q\\
%x_{iq}(\vec{t}) &\in& \{0,1\} &\forall i,q\\
%\sum_q x_{iq}(\vec{t}) &=& 1 \\
%p_{i}(\vec{t}) &\in& [0, B_i] &\forall i
%\end{array} \]

\paragraph{\bf Solving the Problem $\WMP(\vec{\alpha}, \vec{\beta})$} Now we give a poly-time algorithm to solve the problem $\WMP(\vec{\alpha}, \vec{\beta})$. Observe that, in an optimal assignment, if buyer $i$ is given $q$ items, then  (a) $\beta_{it_i}\le 0\Rightarrow p_{i}(\vec{t}) = 0$, and (b) $\beta_{it_i}> 0\Rightarrow p_{i}(\vec{t})=\min\{B_i, v_i(q, t_i)\}$.

The allocation problem can be solved using dynamic programming as follows: Process buyers in order $n$ to $1$. Let $A[i,k]$ be the value of $\WMP(\vec{\alpha}, \vec{\beta})$ considering buyers $i$ though $n$ and $k$ items. The values of $A[i,k]$ can be computed as follows:
\[\begin{array}{rcl}
A[i,k] &=&\min_{0\le j\le k}(A[i+1, k-j] +j\times\alpha_{ijt_i} F+\max\{\beta_{it_i}\times \min\{B_i, v_i(j,t_i)\}, 0\})
\end{array}\]
As we shall see later, we will require the algorithm to break ties consistently; this can always be achieved by having a state-less implementation of the algorithm.

\paragraph{\bf Using Multiplicative Weight Update Framework} Now we setup the parameters for the multiplicative weight update framework. Recall that $L$ is the largest possible welfare a single buyer can receive from any allocation. Further, $L$ also upper bounds individual buyer's budgets. Then, the width of each equality constraint ($\expert$) is bounded by $\max\{n,L\}=L$.  As $\max_i |T_i| \le L$, the number of constraints in $\expert$ are bounded by $2nmL$. Further, the minimum support for any type for a buyer is bounded by $1/L$. Given any $\epsilon>0$,  choose $\delta=\epsilon/(nmL)$ and $K=O\left(\frac{L^2 \log(nmL)}{\delta^2}\right)$; let $(\vec{\alpha^1},\vec{\beta^1}), (\vec{\alpha^2},\vec{\beta^2}), \ldots, (\vec{\alpha^K}, \vec{\beta^K})$  be the dual variables in rounds $1,\ldots,K$ associated with the multiplicative weight update process. Our mechanism $\M\left(\vec{\alpha}[1\ldots K], \vec{\beta}[1\ldots K]\right)$ is defined below; Lemma~\ref{lem:ahk-multi-unit} establishes its important properties.
\begin{definition}
Let $\vec{t}$ be the reported type vector. The mechanism $\M\left(\vec{\alpha}[1\ldots K], \vec{\beta}[1\ldots K]\right)$ chooses an integer $\ell$ between $1$ and $K$ uniformly at random, and solves the problem $\WMP(\vec{\alpha^{\ell}}, \vec{\beta^{\ell}})$ for type vector $\vec{t}$. It then executes allocations and payments according to the solution of $\WMP(\vec{\alpha^{\ell}}, \vec{\beta^{\ell}})$.
\end{definition}
\begin{lemma}\label{lem:ahk-multi-unit}
(a) The mechanism $\M\left(\vec{\alpha}[1\ldots K], \vec{\beta}[1\ldots K]\right)$ is $\epsilon$-BIC, ex-post IR, (b) a buyer never pays more than his budget, and (c) its expected revenue is at least $\OPT-\epsilon$.
\end{lemma}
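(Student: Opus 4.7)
The plan is to handle the three claims in order, with parts (b) and the ex-post IR half of (a) essentially automatic while the $\epsilon$-BIC and revenue guarantees form the technical core. Whichever round $\ell$ is drawn, the action $(x^\ell(\vec{t}), p^\ell(\vec{t}))$ executed by $\M$ lies in $\F(\vec{t})$ by construction of $\WMP$; since $\F(\vec{t})$ explicitly encodes $p_i(\vec{t})\in[0,B_i]$ and $\sum_q v_i(q,t_i)\,x_{iq}(\vec{t}) \ge p_i(\vec{t})$, the budget cap and the non-negative utility at the reported type hold pointwise in every realization, not merely in expectation. This settles (b) and the ex-post IR part of (a).

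For the $\epsilon$-BIC claim I would introduce two parallel interim vectors and argue they are close. Let $(X^\ell, P^\ell, \{x^\ell(\vec{t}),p^\ell(\vec{t})\})$ denote the oracle's output at round $\ell$, and define the oracle-side averages $\bar{X}_{iq}(t_i) := \tfrac{1}{K}\sum_\ell X^\ell_{iq}(t_i)$ and $\bar{P}_i(t_i) := \tfrac{1}{K}\sum_\ell P^\ell_i(t_i)$, alongside the true interim quantities $\hat{X}_{iq}(t_i),\hat{P}_i(t_i)$ that $\M$ actually induces conditional on buyer $i$ reporting $t_i$. Because the oracle enforces (3) and (4) in every single round, linearity forces $(\bar X,\bar P)$ to satisfy the BIC inequality exactly and to meet the revenue target $\OPT$ exactly, so it suffices to establish coordinate-wise closeness between $(\hat X,\hat P)$ and $(\bar X,\bar P)$. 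That closeness is precisely the content of Theorem~\ref{thm:ahk} applied to the equality system $\expert$: the per-constraint width is $O(L)$, the LP is feasible in every round because any truly optimal mechanism witnesses feasibility (so AHK never declares infeasibility), and dividing the $\delta$-level residual by $f_i(t_i)\ge 1/L$ yields $|\hat X_{iq}(t_i)-\bar X_{iq}(t_i)|\le L\delta$ and $|\hat P_i(t_i)-\bar P_i(t_i)|\le L\delta$ uniformly in $(i,q,t_i)$.

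Substituting these perturbation bounds into both sides of the BIC inequality (3) shifts each side by at most $O(mL^2\delta)$, and weighting the payment deviation by $f_i(t_i)$ and summing over $(i,t_i)$ bounds the revenue gap $|\sum_{i,t_i}f_i(t_i)(\bar P_i(t_i)-\hat P_i(t_i))|$ by $O(nL\delta)$. The prescribed choice $\delta = \epsilon/(nmL)$ makes both errors at most $\epsilon$, yielding $\epsilon$-BIC and expected revenue at least $\OPT-\epsilon$, and thereby completing (a) and (c). The main obstacle is really just parameter bookkeeping, namely pinning down the $O(L)$ width of $\expert$, counting its $O(nmL)$ rows, and verifying that both perturbation bounds are absorbed by the prescribed $\delta$; on top of this one must invoke the (explicitly noted) deterministic tie-breaking for $\WMP$, which ensures that $(\hat X,\hat P)$ is well-defined as a function of the dual sequence $(\vec\alpha[1\ldots K],\vec\beta[1\ldots K])$. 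Everything beyond that reduces to linearity of expectation and a direct appeal to Theorem~\ref{thm:ahk}.
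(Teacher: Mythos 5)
Your proposal is correct and follows essentially the same route as the paper's (much terser) proof: invoke Theorem~\ref{thm:ahk} to bound the average violation of the $\expert$ constraints, then propagate that slack linearly into the BIC and revenue inequalities, with the ex-post IR and budget claims holding pointwise because every executed action lies in $\F(\vec{t})$. You spell out the oracle-side averages $(\bar X,\bar P)$ versus the induced interim quantities $(\hat X,\hat P)$ more explicitly than the paper does, and you make the $\F(\vec{t})$ argument for parts (a) and (b) explicit where the paper leaves it implicit, but the underlying argument is the same.
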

\begin{proof}
Using the properties of the multiplicative update framework, Equations~\ref{eq1}, and \ref{eq2} are violated by at most $\epsilon/(nL)$. Since $\max_{i,q,t_i\in T_i}v_{i}(q, t_i) \le L$, the error in each BIC constraint is bounded by $\epsilon/n$. Further, the error in the revenue equation is bounded by $\epsilon$ as $\max_{i,t_i\in T_i} P_i(t_i)\le L$.  This completes the proof of the lemma.
\end{proof}
The only remaining issue is to compute the values of the dual variables efficiently; computing them exactly will require time $\Pi_{i}|T_i|$ which is exponential in $n$.

\subsection{Efficient Computation via Sampling}
\label{sec:rt}
The main hurdle in running multiplicative update framework is that each constraint in $\expert$ is over exponentially many variables. We relax each constraint in $\expert$ by $\delta=\frac{\epsilon}{nmL}$. The relaxed constraints are
\begin{align}
\label{eqn5} X_{iq}(t_i) &\in \frac{\sum_{\vec{t} | t \in \vec{t}} \mu(\vec{t}) x_{iq}(\vec{t})}{f_i(t_i)} \pm \delta  &\forall i,q,t_i\in T_i\\
\label{eqn6} P_i(t_i) &\in \frac{\sum_{\vec{t} | t \in \vec{t}} \mu(\vec{t}) p_{i}(\vec{t})}{f_i(t_i)} \pm \delta &\forall i, t_i\in T_i
\end{align}
If the relaxed constraints are feasible (with respect to feasibility constraints $\C$), then it is easy to check that this implies that the BIC constraints are satisfied to an additive $\epsilon$, and the revenue generated is at least $\OPT - \epsilon$. Further, note that the {\em width} of each of relaxed constraint is bounded by $\rho=mL$. We apply the multiplicative weight update method as follows: In each round $\ell$, given dual weights $(\vec{\alpha}^{\ell},\vec{\beta}^{\ell})$, we do the following. Generate $C=O(nmL^3\log(nmLK)/\epsilon)$ samples from $\DD$; let $S_{\ell}$ be this set.
%Now, for any constraint involving $X_{ipq}(t_i)$, add $x_{ipq}(\sigma)$ to the RHS of that constraint. In other words, we pretend this sample is generated in a correlated fashion {\em for all} constraints $i,t$ where the $i^{th}$ component of $\sigma$ is $t_i$.
Let $S_{it_i\ell}\subset S_{\ell}$ be the set of samples in which buyer $i$ has type $t_i\in T_i$. We consider relaxed and sampled version of constraints $\expert$:
\begin{align*}
X_{iq}(t_i) &\in \frac{\sum_{\vec{t}\in S_{it_i\ell}} x_{iq}(\vec{t})}{|S_{it_i\ell}|} \pm \delta &\forall i, q, t_i\in T_i\\
P_i(t_i) &\in \frac{\sum_{\vec{t}\in S_{it_i\ell}} p_{i}(\vec{t})}{|S_{it_i\ell}|} \pm \delta &\forall i, t_i\in T_i
\end{align*}

We now apply the {\sc Oracle} to these relaxed constraints using the dual weight vector $\left(\vec{\alpha}^{\ell}, \vec{\beta}^{\ell}\right)$.
We make a crucial observation:
\begin{observation}
%Since every sample is present in the constraints of all its coordinates,
The function {\sc Oracle} reduces to solving $\WMP(\vec{\alpha}^{\ell}, \vec{\beta}^{\ell})$ on these samples.
\end{observation}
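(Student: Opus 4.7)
The plan is to mirror the decoupling argument that already split the exact Oracle into ${\tt LP_{exp}}$ and per-type-vector $\WMP$ computations, and apply it to the relaxed sampled constraints of Equations~(\ref{eqn5})--(\ref{eqn6}). Concretely, I would take each sampled constraint, written as a pair of one-sided inequalities with slack $\delta$, and dualize them using $\alpha^\ell_{iqt_i}$ (for the allocation constraint tied to $(i,q,t_i)$) and $\beta^\ell_{it_i}$ (for the payment constraint tied to $(i,t_i)$). The Oracle objective to be maximized over feasible tuples $(X,P,\{x(\vec{t}),p(\vec{t})\}_{\vec{t}\in S_\ell})$ then consists of (i) a linear combination of the interim variables $X_{iq}(t_i), P_i(t_i)$, and (ii) a linear combination of the per-sample action variables $x_{iq}(\vec{t}), p_i(\vec{t})$ for $\vec{t}\in S_\ell$, just as in the exact case.

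Next, I would observe that (i) and (ii) decouple: the constraints $\C$ partition into the BIC constraint~(\ref{eq3}) and the revenue constraint~(\ref{eq4}), both of which involve only the interim variables, and the per-sample feasibility sets $\F(\vec{t})$, which involve only the actions for the single type vector $\vec{t}$. Thus part (i) is handled by the analogue of ${\tt LP_{exp}}$ over interim variables, exactly as in the exact Oracle. For part (ii), by swapping the order of summation I would rewrite the contribution as
$\sum_{\vec{t}\in S_\ell}\sum_i \left(\sum_q \frac{\alpha^\ell_{iq\tau_i(\vec{t})}}{|S_{i\tau_i(\vec{t})\ell}|}\, x_{iq}(\vec{t}) + \frac{\beta^\ell_{i\tau_i(\vec{t})}}{|S_{i\tau_i(\vec{t})\ell}|}\, p_i(\vec{t})\right)$,
where $\tau_i(\vec{t})$ denotes buyer $i$'s type in $\vec{t}$. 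Since the $\F(\vec{t})$ are separable across samples, maximizing this reduces to solving, independently for each $\vec{t}\in S_\ell$, an instance of $\WMP$ driven by the duals $\alpha^\ell,\beta^\ell$ (scaled by the positive sample-bucket sizes $|S_{i\tau_i(\vec{t})\ell}|^{-1}$).

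The only genuinely new consideration relative to the exact case is this per-buyer rescaling, which arises because the sampled constraints condition on each coordinate of $\vec{t}$ separately rather than weighting by the full joint density $\mu(\vec{t})$. This is not an obstacle: the dynamic program for $\WMP$ presented earlier is oblivious to the numerical values of its input duals, so feeding it the rescaled coefficients recovers an optimum action for the sample $\vec{t}$ in polynomial time, and the resulting mechanism is the one simulated in round~$\ell$. Putting the two pieces together, a single Oracle invocation on the relaxed sampled constraints is implemented by one solve of the interim LP plus one call to $\WMP$ per sample in $S_\ell$, which is what the observation asserts. The main point to verify carefully in a written-out proof is that no cross-sample coupling sneaks in through the dualized terms, and the rearrangement above makes this transparent because each $x_{iq}(\vec{t})$ (resp. $p_i(\vec{t})$) appears with a coefficient that depends only on $\vec{t}$ and the fixed duals $(\vec{\alpha}^\ell,\vec{\beta}^\ell)$.
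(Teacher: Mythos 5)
The paper offers no proof of this observation---it is asserted as immediate from the decoupling already exhibited for the exact Oracle. Your argument fills in exactly the intended idea: dualize each sampled constraint, interchange the order of summation between constraint indices $(i,q,t_i)$ and samples $\vec{t}\in S_\ell$, and exploit the per-sample separability of $\F(\vec{t})$, so that the problem splits into an interim-variable LP and one independent $\WMP$-type solve per sample. Explicitly tracking the coefficient attached to each $x_{iq}(\vec{t})$ is the right way to verify no cross-sample coupling sneaks in, and your conclusion is correct.

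One correction on the rescaling you flag as ``genuinely new'': the per-bucket factor $1/|S_{i\tau_i(\vec{t})\ell}|$ appears only because you dualize the normalized form of the sampled constraints. The paper's exact Oracle is written by dualizing the unnormalized form $f_i(t_i)X_{iq}(t_i)=\sum_{\vec{t}\,|\,t_i\in\vec{t}}\mu(\vec{t})x_{iq}(\vec{t})$, which is why no $1/f_i(t_i)$ factor appears in $\WMP$ there. The consistent sampled analogue gives each sample mass $1/C$ (the empirical analogue of $\mu(\vec{t})$), i.e., one dualizes $\frac{|S_{it_i\ell}|}{C}X_{iq}(t_i)=\frac{1}{C}\sum_{\vec{t}\in S_{it_i\ell}}x_{iq}(\vec{t})$; then after swapping sums and clearing the common $1/C$, the per-sample objective is exactly $\sum_{i,q}\alpha^\ell_{iq\tau_i(\vec{t})}x_{iq}(\vec{t})+\sum_i\beta^\ell_{i\tau_i(\vec{t})}p_i(\vec{t})$, i.e., literally $\WMP(\vec{\alpha}^\ell,\vec{\beta}^\ell)$, with no per-buyer rescaling. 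Your version is equivalent up to a positive per-buyer reparametrization of the duals, and you correctly observe the dynamic program is indifferent to this, so nothing breaks---but the rescaling is an artifact of the constraint normalization you chose, not something intrinsic to passing from the exact distribution to samples.
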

We now establish the correctness of our procedure. Let ${\tt LP(\DD)}$ be the feasibility problem (constraints $\C$ and $\expert$) defined on distribution $\DD$ and let ${\tt LP}(S_{\ell})$ for $1\le \ell\le K$ be the feasibility problem defined on $S_{\ell}$, where constraints in $\expert$ are relaxed by $\delta$. We focus on the allocation constraints; the analysis for payment constraints is similar. Let $\{x^{\ell}_{iq}(\vec{t})\}$ denote the solution to the {\sc Oracle} on ${\tt LP}(S_{\ell})$. Let $\{x_{iq}^*(\vec{t})\}$ denote the feasible solution for ${\tt LP(\DD)}$ with no slack.

\begin{lemma}
\label{lem:whp}
With high probability ($\ge 1-1/\mbox{poly}(nmLK/\epsilon)$), for all $1 \le l \le K$, we have:
\begin{align*}
\left|\frac{\sum_{\vec{t}\in S_{it_i\ell}} x^{\ell}_{iq}(\vec{t})}{|S_{it_i\ell}|}-\frac{\sum_{\vec{t}\in \DD|t_i\in \vec{t}}\mu(\vec{t})x^{\ell}_{iq}(\vec{t})}{f_i(t_i)}\right| &\le \delta \\
\left|\frac{\sum_{\vec{t}\in S_{it_i\ell}} x^*_{iq}(\vec{t})}{|S_{it_i\ell}|}-\frac{\sum_{\vec{t}\in \DD|t_i\in \vec{t}}\mu(\vec{t})x^*_{iq}(\vec{t})}{f_i(t_i)}\right| &\le  \delta
\end{align*}
\end{lemma}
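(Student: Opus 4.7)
My plan is to prove both inequalities via a Hoeffding/Chernoff concentration argument with a union bound over $(\ell, i, q, t_i)$. The two inequalities require slightly different handling because $x^*_{iq}$ is a fixed function, independent of the random samples, while $x^\ell_{iq}$ is itself produced by the oracle and therefore data-dependent. In both cases the quantity being bounded is a conditional expectation: restricting to samples $\vec{t} \in S_{it_i\ell}$ conditions on the $i$-th coordinate being $t_i$, so the target value on the right-hand side is exactly the conditional expectation of the allocation function, namely $\sum_{\vec{t}:\, t_i \text{ at }i} \mu(\vec{t}) / f_i(t_i)\cdot x_{iq}(\vec{t})$.

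The main obstacle is the first inequality, where $x^\ell$ depends on $S_\ell$ through the dual vector $(\vec{\alpha}^\ell, \vec{\beta}^\ell)$. The key observation is that the vector $(\vec{\alpha}^\ell, \vec{\beta}^\ell)$ is measurable with respect to the history $\mathcal{H}_{\ell-1}$ of samples and oracle outputs from the previous $\ell-1$ rounds; and, because $\WMP(\vec{\alpha}^\ell, \vec{\beta}^\ell)$ decouples across type vectors, the map $\vec{t} \mapsto x^\ell_{iq}(\vec{t})$ depends only on $(\vec{\alpha}^\ell, \vec{\beta}^\ell)$ and $\vec{t}$, not on the other samples in $S_\ell$. (Here I would invoke the paper's earlier remark that $\WMP$ is implemented state-lessly so that ties are broken consistently.) Consequently, after conditioning on $\mathcal{H}_{\ell-1}$, the function $\vec{t} \mapsto x^\ell_{iq}(\vec{t})$ is a deterministic $[0,1]$-valued function, and $S_\ell$ is a fresh i.i.d. sample from $\DD$ independent of $\mathcal{H}_{\ell-1}$. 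Standard concentration then applies conditionally, and averaging out the conditioning preserves the tail bound.

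To carry the argument out concretely for a fixed $(\ell, i, q, t_i)$: since $f_i(t_i) \ge 1/L$, the expected size of $S_{it_i\ell}$ is at least $C/L$, and a Chernoff bound shows $|S_{it_i\ell}| \ge C/(2L)$ except with probability $\exp(-\Omega(C/L))$; conditional on this lower bound, Hoeffding's inequality on an average of at least $C/(2L)$ i.i.d.\ $[0,1]$-bounded terms yields deviation at most $\delta = \epsilon/(nmL)$ except with probability $\exp(-\Omega(C\delta^2/L))$. With $C = \Theta\!\bigl(nmL^3 \log(nmLK)/\epsilon\bigr)$ (together with a suitable polynomial overhead for the desired $1/\mathrm{poly}$ failure), both failure probabilities become $1/\mathrm{poly}(nmLK/\epsilon)$. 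For the second inequality, $x^*$ is fixed (independent of any samples), so the very same Hoeffding-plus-Chernoff argument applies directly without any conditioning. Finally I would union bound over at most $O(KnmL)$ choices of $(\ell, i, q, t_i)$ and both inequalities, which is absorbed into the $\log(nmLK)$ factor in $C$, giving the stated high-probability guarantee for all rounds simultaneously.
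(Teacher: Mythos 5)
Your proof is correct and takes essentially the same approach as the paper: concentration via Hoeffding/Chernoff on $|S_{it_i\ell}|$ and the sample averages, plus a union bound over $(\ell,i,q,t_i)$, using the fact that $x^\ell$ decouples across samples once $(\vec{\alpha}^\ell,\vec{\beta}^\ell)$ is fixed. If anything, your version is slightly more careful than the paper's in phrasing the key step as a \emph{conditional} independence argument (conditioning on the history $\mathcal{H}_{\ell-1}$ that determines the duals, after which $S_\ell$ is fresh i.i.d.), where the paper somewhat loosely writes that $S_\ell$ is ``chosen independent of $\alpha^\ell,\beta^\ell$.''
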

\begin{proof}
Since we assumed $f_i(t_i) \ge 1/L$, using Hoeffdings's bounds and union bound, the probability that, for each $i, t_i\in T_i, \ell$, $|S_{it_i\ell}| \in (1\pm \epsilon) C\times f_i(t_i)$ is $1-poly(nmLK/\epsilon)$. Thus $|S_{it_i\ell}|=\Omega(nmL^2K/\epsilon)$ for each $i,t\in T_i,K$.
%each set $S_{it_i\ell} = \Omega(L^2)$ with high probability. This ensures the first part follows using Hoeffding's inequality and union bounds;  the details are standard and omitted.

We now prove the first part. The value of $x^{\ell}_{iq}(\vec{t})$ is determined by the values of dual variables $\alpha^{\ell}, \beta^{\ell}$. From the previous observation, the {\sc Oracle} decomposes into solving  $\WMP(\vec{\alpha}^{\ell}, \vec{\beta}^{\ell})$ on individual samples. The values of $x_{iq}(\vec{t})$ for different values of $\vec{t}$ can be seen as a distribution parameterized by  $\alpha^{\ell}, \beta^{\ell}$, and since the set $S_{\ell}$ is chosen independent of previous rounds and of $\alpha^{\ell}, \beta^{\ell}$, the process of drawing type vectors for $S_{\ell}$ can be seen as drawing independent samples from this distribution. As each $x_{iq}(\vec{t})\in \{0,1\}$ (i.e. a bounded random variable), we can apply Hoeffding's inequality and union bound to prove the first part.

For the second part, note $x^{*}_{iq}(\vec{t})\in\{0,1\}$, and the process of sampling valuation vectors $\vec{t}$ corresponds to sampling from distribution $\{x^{*}_{iq}\}$. The proof now follows using Hoeffding's inequality and union bound.
%The second part now follows using Hoeffding's inequality ($x_{iq}(\vec{t})\in \{0,1\}$) and union bounds.
\end{proof}

As a corollary, suppose ${\tt LP(\DD)}$ is feasible without any slack, then ${\tt LP(S_{\ell})}$ is feasible for $1\le \ell\le K$ with high probability ($\ge 1-1/\mbox{poly}(nmLK/\epsilon)$). This implies that the {\sc Oracle} will not return infeasibility w.h.p.

To establish the correctness, it remains to bound the error in each constraint when the samples are drawn from $\DD$ and $\M(\vec{\alpha}[1,\ldots,K], \vec{\beta}[1,\ldots,K])$ is used determine allocations and payments. We focus on one constraint:
\begin{align*}
\frac{\sum_{\vec{t}\in S_{it_i\ell}} x_{iq}(\vec{t})}{|S_{it_i\ell}|} -X_{iq}(t_i) &\ge - \delta \qquad\forall i, t_i\in T_i
\end{align*}
Using Theorem~\ref{thm:ahk}, we get
\begin{align*}
\frac{1}{K}\times \sum_{1\le \ell\le K} \left(\frac{\sum_{\vec{t}\in S_{it_i\ell}} x^{\ell}_{iq}(\vec{t})}{|S_{it_i\ell}|} -X^{\ell}_{iq}(t_i) \right) &\ge -\delta-\delta
\end{align*}
where $x^{\ell}_{iq}(\vec{t})$ denotes whether in round $\ell$, buyer $i$ is given $q$ items when the type vector is $\vec{t}$. Applying Lemma~\ref{lem:whp} once again, we have:
\begin{align*}
\sum_{1\le \ell\le K}\frac{\sum_{\vec{t}\in \DD|t_i\in \vec{t}}\mu(\vec{t})x^{\ell}_{iq}(\vec{t})/f_i(t_i) -X^{\ell}_{iq}(t_i)}{K} &\ge -3\delta
\end{align*}
Rearranging, we obtain:
\begin{align*}
\sum_{\vec{t}\in \DD|t_i\in \vec{t}}\frac{\mu(\vec{t})}{f_i(t_i)} \times \sum_{1\le \ell\le K}\frac{x^{\ell}_{iq}(\vec{t})}{K}  - \sum_{1\le \ell\le K} \frac{X^{\ell}_{iq}(t_i)}{K} &\ge -3\delta
\end{align*}

This shows that if round $\ell \in \{1,2,\ldots,K\}$ is chosen uniformly at random and its allocation and payment according to mechanism $\WMP(\vec{\alpha}^{\ell}, \vec{\beta}^{\ell})$ are used, this yields a $\epsilon$-BIC mechanism with revenue $\OPT - \epsilon$.

\subsection{Handling Private Budgets}\label{sec:private-budget}
Next, we illustrate how to extend our techniques when buyers' budgets are private and specified through the reported types. Recall that $B_i(t_i)$ denotes buyer $i$'s budget when his reported type is $t_i$. Under the assumption that, the utility of paying more than the budget is $-\infty$ for buyer, we offer all items to an arbitrarily chosen buyer with a vanishingly small probability; otherwise we run the mechanism $\M\left(\vec{\alpha}[1\ldots T], \vec{\beta}[1\ldots T]\right)$. In such case, buyer $i$ never reports type $t'_i$ with $B_i(t'_i) > B_i(t_i)$ when his real type is $t_i$. Now the BIC constraint changes to
\[\begin{array}{rcl}
\sum_q v_i(q,t_i)X_{iq}(t_i) - P_i(t_i) &\ge& \sum_q v_i(q,t_i)X_{i}(t'_i) -P_i(t'_{i}) \\
& &\forall i, t_i,t'_i\in T_i {\mbox{ with }} B_i(t_i) \ge B_i(t'_i)
\end{array}\]
Thus, in-spite of having utility of $-\infty$ for paying more than the budget, the width of the LP does not increase and we can use techniques developed in the earlier part of this section to develop an $\epsilon$-BIC, ex-post IR mechanism.

\section{Overall Framework for Optimal Auction Design}\label{sec:framework}
We next summarize the generic framework for mechanism design using the multiplicative weight update method. There are two sets of variables:
\begin{itemize}
\item {\em Action Variables:} For each type vector $\vec{t}$, $x_i(\vec{t})$ and $p_i(\vec{t})$ indicates the mechanism's allocation and payment for buyer $i$. As discussed earlier, the optimal mechanism can take a randomized action for a type vector $\vec{t}$; it does not affect the analysis of the framework.
\item {\em Holistic Variables:} For buyer $i$, each $t_i,t_i'\in T_i$, $U_i(t_i, t'_i)$ indicates the average utility for reporting type $t_i$ when the real type is $t'_i$. Further, $\P_i(t_i, t'_i)$ is his average payment for reported type $t_i$ when the real type is $t'_i$.
\end{itemize}
There are three sets of constraints:
\begin{itemize}
\item $\F(\vec{t})$: the feasible set of actions for type vector $\vec{t}$. In each realization, $\left<\vec{x}(\vec{t}), \vec{p}(\vec{t})\right>\in \F(\vec{t})$.
\item $\expert$: the set of constraints that connect action variables to holistic variables. {\em This connection is problem specific}, we illustrate its generic structure.

For buyer $i$ and a reported type vector $\vec{t}$, where $t_i\in\vec{t}$ is his reported type,  let $\u_i(\vec{t}_{t_i\in \vec{t}}, t'_i, x_i(\vec{t}), p_i(\vec{t}))$  and $h_i(\vec{t}_{t_i\in \vec{t}}, t'_i, x_i(\vec{t}), p_i(\vec{t}))$ be his utility and payment when $t'_i$ is his real type  and $\left<x_i(\vec{t}), p_i(\vec{t})\right>$ are the mechanism's action variables.
 Then the constraints for buyer $i$ are
\begin{align*}
U_i(t_i, t'_i) &= \sum_{\vec{t} | t_i \in \vec{t}} \frac{\mu(\vec{t})}{f_i(t_i)} \u_i(\vec{t}_{t_i\in \vec{t}}, t'_i, x_i(\vec{t}), p_i(\vec{t}))\\
\P_i(t_i, t'_i) &= \sum_{\vec{t} | t_i \in \vec{t}} \frac{\mu(\vec{t})}{f_i(t_i)} h_i(\vec{t}_{t_i\in \vec{t}}, t'_i, x_i(\vec{t}), p_i(\vec{t}))
\end{align*}
\item Linear constraints on holistic variables such as BIC, revenue, and other ex-interim constraints.
\end{itemize}
Each equation in $\expert$ plays the role of an expert in the multiplicative weight update framework. The oracle problem {\sc Oracle}$(\vec{\alpha^{\ell}}, \vec{\beta^{\ell}})$ in round $\ell$ asks us to solve, for dual multipliers $\vec{\alpha^{\ell}}, \vec{\beta^{\ell}}$, the following:
\[\begin{array}{c}
{\mbox {Maximize}} \sum_{\vec{t}} \left( \mu(\vec{t})\times \sum_{i, t_i'\in T_i}\left(\alpha_{it_it'_i}^{\ell}\u_i(\vec{t}_{t_i\in \vec{t}}, t'_i, x_i(\vec{t}), p_i(\vec{t})) + \beta_{it_it'_i}^{\ell} h_i(\vec{t}_{t_i\in \vec{t}}, t'_i, x_i(\vec{t}), p_i(\vec{t})) \right)\right)
\end{array}\]
subject to constraint $\F(\vec{t})$ for each $\vec{t}$. The optimization problem $\WMP(\vec{\alpha^{\ell}}, \vec{\beta^{\ell}})$ for type vector $\vec{t}$ maximizes, subject to constraint $\F(\vec{t})$, the quantity
\[\begin{array}{c}
\sum_{i, t_i'\in T_i}\left(\alpha_{it_it'_i}^{\ell}\u_i(\vec{t}_{t_i\in \vec{t}}, t'_i, x_i(\vec{t}), p_i(\vec{t})) + \beta_{it_it'_i}^{\ell} h_i(\vec{t}_{t_i\in \vec{t}}, t'_i, x_i(\vec{t}), p_i(\vec{t})) \right)
\end{array}\]
Let $\tt ALG$ be the algorithm that can solve this problem optimally. Then, as established in Section~\ref{sec:multi-unit-budget}, it suffices to solve $\WMP(\alpha,\beta)$ for $C$ samples of type vectors drawn from the joint distribution; $C=poly(n, m, \sum_i |T_i|, 1/\epsilon, L)$. The dual multipliers $\left(\vec{\alpha^{\ell+1}}, \vec{\beta^{\ell+1}}\right)$ are computed based on the error in each constraint in round $\ell$. The process is repeated $K=poly(n, \sum_i |T_i|, 1/\epsilon, L)$ times; this generates $K$ sets of dual variables $\left((\alpha^1, \beta^1), (\alpha^2, \beta^2),\ldots,(\alpha^K, \beta^K)\right)$. The eventual mechanism is as follows: for reported type vector $\vec{t}$, pick a round at random, say $\ell$, and solve the optimization problem $\WMP(\vec{\alpha^{\ell}}, \vec{\beta^{\ell}})$ for type vector $\vec{t}$.

In general, a main hurdle to achieving tractability is that the Lagrange multipliers, $\vec{\alpha}, \vec{\beta}$ can be negative.
We note, in many cases the {\em full generality of the framework is not needed}: We can eliminate several variables, and the constraints $\expert$ as well as the holistic variables can take a simple form that is more tractable.

We present instantiations of this framework for multi-item auctions next. The key point is the following: {\em As long as the {\sc Oracle} subproblem constructed above is polynomial time solvable or admits to an approximation algorithm, the entire framework yields a polynomial time algorithm, or a similar approximation guarantee.}

\paragraph{\bf Multi-item Auctions and Tractability} It is quite straightforward to extend the above framework to multi-dimensional auctions. We discuss the case with budgets and other ex-post IR constraints such as envy-freeness in Appendix~\ref{sec:multi-item}, and we only summarize it below. The action variables are $x_{ij}(\vec{t})$, which denotes whether item $j$ allocated to buyer $i$ when the buyers' type vector is $\vec{t}$; and $p_{i}(\vec{t})$, which is the price paid by buyer $i$. For any $\vec{t}$, the allocation and payments must be feasible for a set of allocation and IR constraints, denoted by $\F(\vec{t})$. The holistic variables are $X_{ijt_i}$, denoting the expected allocation of item $j$ to buyer $i$ when his type is $t_i$, and $P_{it_i}$ denoting the expected price paid. When a buyer's welfare is linear in the valuations for different items, the BIC constraints are linear in the holistic variables. The overall problem reduces to checking feasibility of constraints
\[\begin{array}{rcll}
 f_i(t_i) X_{ijt_i} &=& \sum_{\vec{t} | t_i\in \vec{t}} \mu(\vec{t}) x_{ij}(\vec{t}) &  \forall i,j, t_i\in T_i \\
 f_i(t_i) P_{it_i} &=& \sum_{\vec{t} | t_i\in \vec{t}} \mu(\vec{t}) p_{i}(\vec{t}) & \forall i,t_i\in T_i \\
 \end{array} \]
In Appendix~\ref{sec:multi-item}, we show that when the items are divisible, the agent's utilities are linear, and the set of feasible allocations is convex, the {\sc Oracle} problem remains tractable even with private ex-post budgets and other IR constraints.

%However, in many cases, the full generality is not needed and the variables as well as the constraints can take a simpler form. In particular,
We note, if there are no ex-post constraints on payments (setting in~\cite{CDW_EC12,CDW_FOCS12,CDW_SODA13}), the only action variables capture allocations for each buyer given the joint type revealed, so that $\F(\vec{t})$ is simply the space of feasible allocations. The constraints $\expert$ relate these per-scenario allocation variables to the expected allocations for each buyer given his type.  Since there are no per-scenario payment constraints, the holistic payment variables are only present in the BIC and ex-interim IR constraints, and do not need corresponding action variables. Furthermore, when the allocation sets are downward closed, it is easy to show that the constraints $\expert$ can now be written as {\em inequality} constraints instead of equality as follows.
\[\begin{array}{rcll}
 f_i(t_i) X_{ijt_i} & \le & \sum_{\vec{t} | t_i\in \vec{t}} \mu(\vec{t}) x_{ij}(\vec{t}) &  \forall i,j, t_i\in T_i
  \end{array} \]
The advantage of doing this is that it makes the {\sc Oracle} subproblem have non-negative Lagrange multipliers, which makes the subproblem equivalent to welfare maximization over the feasible allocation space. We sketch in Appendix~\ref{sec:general_allocation_constraint} how this approach derives the results for approximately optimal auctions over downward closed allocation spaces presented in~\cite{CDW_SODA13}.

\section{Non-Linear Utility Functions}\label{sec:non-linear}
In this section, we consider multi-unit auctions (with budgets) when agents have non-linear utility functions. We first consider two well studied special cases of buyers with non-linear utility function over allocation and payment, namely (a) auction with quitting rights and (b) the soft budget constraint with cost to borrow. We also address a general setting in which the seller has a non-linear utility function over revenue and buyers have arbitrary utility functions over allocations and payments (which includes scenarios such as buyers' and/or seller's risk-aversion).
%The section uses multi-unit auction as an example.
%We focus on the design of mechanism for multi-unit auctions (with budgets), the setting from Section~\ref{sec:multi-unit-budget}.

\subsection{Auctions with Quitting Rights}\label{sec:quitting_rights}
We first consider a model of auctions with {\em  quitting rights} ~\cite{cj}: The outcome of the mechanism is not binding on the buyer, and {\em in each realization} of the mechanism, the buyer has the option to quit if his payment is more than his valuation of the allocated set. In other words, in each outcome, his utility is the maximum of zero and the difference in his welfare and payment. We note, even though the ex-post IR constraint makes a buyer's utility non-negative for the correctly reported type, the non-binding nature of contract makes utility non-negative even for mis-reporting the type. This makes the BIC constraint non-linear in the expected allocations and payments, and a buyer's expected utility for each type needs to be computed explicitly.
%, {\em i.e.} is quasi-linear in each scenario; and
%The buyer is interested in maximizing expected utility, where the expectation is over the types of the other buyers and the randomness of the mechanism.
%We note that in the settings considered so far in this paper, even though we focus on designing ex-post IR mechanisms, the BIC constraint is still linear in expected allocations and payments, and a buyer's expected utility (for misreporting) is computed allowing negative utility in individual outcomes and the binding nature of contract.  %We discuss our contributions to multi-unit auctions.

As before, consider a realization of the mechanism, let $\vec{t}$ be the reported type vector of all buyers. In the realization, let $x_{iq}(\vec{t})\in\{0, 1\}$ be an indicator variable indicating whether buyer $i$ is allocated $q\in \{0,1,\ldots,m\}$ items when the reported type vector is $\vec{t}$. Let $p_{i}(\vec{t})$ be the payment made by buyer $i$.
%In addition, let $x_{iq}(\vec{t},t'_i)$ and $p_{i}(\vec{t},t'_i)$ denote the final allocation and payment to buyer $i$ if the revealed type is $t_i \in \vec{t}$, but his true type is $t'_i$. This is different from  $x_{iq}(\vec{t}), p_{i}(\vec{t})$ since the bidder has the option of not participating if the utility she accrues is negative.
%Define the utility for revealing a type $t_i$, when real type is $t'_i$ as:
%$$ u_i(i,\vec{t}, t'_i)  =  \max\{\sum_{q} v_i(q,t'_i)x_{iq}(\vec{t}) - p_{i}(\vec{t}) , 0\}$$
The variables should satisfy the constraints $\F(\vec{t})$ below.
\[ \begin{array}{rcll}
\sum_{i} qx_{i,q}(\vec{t}) & \le & m & \\
%U(i,\vec{t}, t) & := &
\sum_{q} v_i(q,t_i)x_{iq}(\vec{t}) - p_{i}(\vec{t})   &\ge&  0 & \forall i\\
x_{iq}(\vec{t}) &\in& \{0,1\} &\forall i,q\\
\sum_q x_{iq} &=& 1 & \forall i \\
p_{i}(\vec{t}) &\in& [0, B_i] &\forall i
%U(i,\vec{t}, t')  < 0 & \Rightarrow & (x_{iq}(\vec{t},t'), p_{i}(\vec{t},t') )  = (0, 0) & \forall i, t' \\
%U(i,\vec{t}, t') \ge 0 & \Rightarrow & (x_{iq}(\vec{t},t'), p_{i}(\vec{t},t') )  = (x_{iq}(\vec{t}),  p_{i}(\vec{t}) )& \forall i, t' \\
\end{array}\]
%Note, there is no individual rationality constraint in $\F(\vec{t})$.
Let $U_{i}(t_i, t'_i)$ be buyer $i$'s expected utility when his reported type is $t_i$ and the real type is $t'_i$.
We get, $\forall i, t_i, t'_i\in T_i$
\begin{align*}
U_i(t_i, t'_i) = \sum_{\vec{t} | t_i \in \vec{t}} \frac{\mu(\vec{t})}{f_i(t_i)} \left(\max\left\{\sum_q v_{iq}(t'_i)x_{iq}(\vec{t}) -p_{i}(\vec{t}), 0\right\}\right)
\end{align*}
(Recall, we denote buyer $i$'s reported type in $\vec{t}$ by $t_i$.) The BIC constraint is the following:
\begin{equation*}
U_{i}(t_i, t_i) \ge U_{i}(t'_i, t_i) \qquad \forall i, t_i, t'_i\in T_i
\end{equation*}
We add the revenue constraints:
\[\begin{array}{rcll}
\sum_{i,t_i} f_i(t_i)  P_i(t_i) &\ge&  \OPT\\
f_i(t_i) P_{i}(t_i) &=& \sum_{\vec{t} | t_i \in \vec{t}} \mu(\vec{t}) p_{i}(\vec{t}) &\forall i, t_i\in T_i
\end{array}\]
We discuss the correctness of the revenue constraints: as the mechanism is always individual rational for the reported type, the revenue from buyer $i$ for type vector $\vec{t}$ is $p_i(\vec{t})$, when he reports his type truthfully.
The {\sc Oracle} problem $\WMP(\vec{\alpha}, \vec{\beta})$, which for every type vector $\vec{t}$, solves the following subject to constraints $\F(\vec{t})$:
\[\begin{array}{rl}
\mbox{Maximize } &\sum_{it'_i} \alpha_{it_it'_i}\left( \max\left\{\sum_{q} v_i(q,t'_i)x_{iq}(\vec{t}) - p_{i}(\vec{t}) , 0\right\}\right) +\sum_i \beta_{it_i}p_{i}(\vec{t})
\end{array}\]
For any type vector $\vec{t}$, this can be solved using dynamic programming: if $q$ items are allocated to buyer $i$ then the payment is uniquely determined as follows:

\smallskip
\noindent
(a) If $\beta_{it_i}\le 0$, then $p_i(\vec{t})=0$\\
(b) If $\beta_{it_i} > 0$, then $p_i(\vec{t})$ is the price that maximizes \\
$\sum_{t'_i} \alpha_{it_it'_i}\left( \max\{\sum_{q} v_i(q,t'_i)x_{iq}(\vec{t}) - p_{i}(\vec{t}) , 0\}\right)$. Its value lies in the set $\{0, B_i\}\cup\{v_i(q,t'_i)|t'_i\in T_i\}$ and can be computed in polynomial time. This implies a poly-time dynamic program analogous to that in Section~\ref{sec:multi-unit-budget}. We get the following theorem using multiplicative weight update method.
\begin{theorem}
For multi-unit auctions with quitting rights and budget constraints, there exists a polynomial time algorithm to compute an $\epsilon$-BIC mechanism with revenue $\OPT-\epsilon$.
\end{theorem}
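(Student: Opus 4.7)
The plan is to instantiate the generic framework from Section~\ref{sec:framework} on the setup that has already been laid out in Section~\ref{sec:quitting_rights}. The action variables $x_{iq}(\vec{t}), p_i(\vec{t})$ live in the per-realization feasibility set $\F(\vec{t})$; the holistic variables are $U_i(t_i, t'_i)$ and $P_i(t_i)$. The expert constraints $\expert$ are the two linearity-of-expectation equalities relating $U_i$ and $P_i$ to the per-realization actions, and the remaining constraints (BIC $U_i(t_i,t_i) \ge U_i(t'_i,t_i)$, and the revenue inequality $\sum_{i,t_i} f_i(t_i) P_i(t_i) \ge \OPT$) are linear in the holistic variables. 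Crucially, although the buyer's per-realization utility $\max\{\sum_q v_i(q,t'_i)x_{iq}(\vec{t}) - p_i(\vec{t}),0\}$ is a non-linear function of the action variables, it enters $\expert$ only linearly on the right-hand side, so the overall LP in holistic variables stays linear and has the same structure as in Section~\ref{sec:multi-unit-budget}.

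Next I would run the multiplicative weight update machinery on this LP, following Section~\ref{sec:ahk} and the decoupling argument of Section~\ref{sec:multi-unit-budget}. In each round $\ell$, the {\sc Oracle} subproblem splits into: (i) a polynomial-size LP in the holistic variables involving BIC and revenue constraints, and (ii) a per-type-vector maximization $\WMP(\vec{\alpha}^{\ell}, \vec{\beta}^{\ell})$ over $\F(\vec{t})$ whose objective is precisely the one already written in Section~\ref{sec:quitting_rights}. To justify polynomial runtime of $\WMP$, I would observe that conditional on buyer $i$ receiving exactly $q$ items, the optimal payment $p_i(\vec{t})$ is determined by a univariate piecewise-linear maximization: if $\beta_{it_i}\le 0$, set $p_i(\vec{t})=0$; otherwise, the optimum lies in the discrete breakpoint set $\{0,B_i\}\cup\{v_i(q,t'_i):t'_i\in T_i\}$, since the objective in $p$ is a sum of at most $|T_i|+1$ piecewise-linear hinge functions. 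Given the closed form for each $(i,q)$, the allocation $q_1,\ldots,q_n$ with $\sum q_i\le m$ is chosen by the same knapsack-style dynamic program as in Section~\ref{sec:multi-unit-budget}, with table $A[i,k]$ recording the best value for buyers $i,\ldots,n$ and $k$ units remaining.

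Given tractability of $\WMP$, I would bound the parameters for the framework. Each expert constraint has width at most $L$ (since utilities, payments and valuations are bounded by $L$ per buyer), the number of experts is $O(n|T_i|^2 + n|T_i|)=O(nL^2)$, and the minimum type probability is $1/L$; the sampling-based argument from Section~\ref{sec:rt} carries over verbatim, giving $K = \mathrm{poly}(n,m,L,1/\epsilon)$ rounds and per-round sample size $C=\mathrm{poly}(n,m,L,1/\epsilon)$. The only thing to recheck with care is the sampling concentration (the analogue of Lemma~\ref{lem:whp}) since the summand inside the expert constraint is now $\max\{\cdot,0\}$ rather than $\{0,1\}$; but it is still a bounded random variable in $[0,L]$, so Hoeffding plus a union bound over all $K$ rounds and all experts yields the same $\pm\delta$ deviation with high probability. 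The mechanism $\M(\vec{\alpha}[1..K],\vec{\beta}[1..K])$ that picks a uniformly random round $\ell$ and runs $\WMP(\vec{\alpha}^{\ell},\vec{\beta}^{\ell})$ is then feasible, ex-post IR (since $\F(\vec{t})$ enforces $\sum_q v_i(q,t_i)x_{iq}(\vec{t})\ge p_i(\vec{t})$), obeys budgets pointwise, is $\epsilon$-BIC, and collects expected revenue at least $\OPT-\epsilon$.

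The main obstacle I anticipate is the non-linear utility entering the BIC constraint: if BIC were written directly in terms of allocations and payments it would not be linear and the framework would not apply. The way around this is exactly the idea highlighted in Section~\ref{sec:framework} of introducing the holistic variables $U_i(t_i,t'_i)$ as separate primal variables, so that BIC becomes a linear inequality $U_i(t_i,t_i)\ge U_i(t'_i,t_i)$ and all the non-linearity is pushed into the right-hand side of $\expert$, where it only affects the per-sample subproblem $\WMP$. Once the univariate closed-form for the optimal payment is in hand, everything else is a routine transplant of the multi-unit-with-budgets argument.
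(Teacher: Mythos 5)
Your proposal is correct and follows essentially the same route as the paper: the same action and holistic variables, the same per-realization feasibility set $\F(\vec{t})$, the same decoupling of the {\sc Oracle} into a small LP over holistic variables plus per-sample $\WMP$, and the same breakpoint argument reducing each buyer's payment optimization to a finite search over $\{0,B_i\}\cup\{v_i(q,t'_i):t'_i\in T_i\}$ followed by a knapsack-style dynamic program over allocations. You are slightly more explicit than the paper in flagging that the Hoeffding-based concentration of Lemma~\ref{lem:whp} still applies because the utility summand, though no longer $\{0,1\}$-valued, is bounded in $[0,L]$; this matches what the paper leaves implicit.
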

%It is easy to check that our techniques also extend for multi-item auctions with splittable items.
%which in turn implies an $\epsilon$-BIC mechanism using the multiplicative weight method. We omit the details.

\subsection{Soft Budget Constraints}
Until now, we have assumed that buyers have fixed budgets and cannot pay more than their budgets. Recently, models with {\em soft} budgets have been considered assuming a buyer's ability to borrow at an additional cost~\cite{DHW11}. The model is formally defined as follows: buyer $i$ is associated with a piecewise linear function $c_i:\R\rightarrow \R$; $c_i(p)$ denotes the total cost for arranging payment $p$ and we assume that $c_i(p)-p$ is monotone. For instance, if buyer $i$ can arrange for interest-free $B_i$ dollars, and has to pay an interest rate of $r$ beyond that; then $c_i(p) = p +\max\{(1+r)\times(p-B_i), 0\}$. The function $c_i$ easily accounts for situations where (a) there is an additional fixed cost to setup the loan, and (b) the total interest amount increases with the amount of the loan.

The buyer's utility in a realization is his valuation of the allocated items minus the cost for his payment. The details are similar to the previous setting, we note the differences. The function $U(t_i,t'_i)$ changes to
\[\begin{array}{c}
f_i(t_i)U_i(t_i,t'_i)=\sum_{\vec{t}|t_i\in \vec{t}}\mu(t)\left(\sum_qv_{i}(q, t'_i)x_{iq}(\vec{t})-c_i(p_i(\vec{t}))\right)
\end{array}\]
The additional constraint in $\F(\vec{t})$ is the IR constraint that the mechanism is rational for the reported type:
\[\begin{array}{c}
\sum_qv_{i}(q, t_i)x_{iq}(\vec{t})-c_i(p(\vec{t}))\ge 0
\end{array}\]
As before, for each type vector $\vec{t}$, the {\sc Oracle} solves the following subject to constraints $\F(\vec{t})$:
\[\begin{array}{c}
\mbox{Maximize } \sum_{it'_i} \alpha_{it_it'_i}\left( \sum_{q} v_i(q,t'_i)x_{iq}(\vec{t}) - c_i(p_{i}(\vec{t}))\right)
\end{array}\]
The optimal solution has the following property: if $q$ items are allocated to buyer $i$, then his payment within the set $\{0, v_i(q,t_i)\}$ and among the end points of the linear pieces of $c_i(p)$. Thus the optimal assignment can be found using dynamic programming and an $\epsilon$-BIC mechanism can be designed using the multiplicative update framework.

\subsection{Non Linear Utility for Seller}
The settings considered until now assume that the seller's utility is linear in revenue and the objective is to maximize his expected revenue. In this section, we address the setting in which the seller is associated with an arbitrary monotone utility function $\U:\R\rightarrow\R$; {\em i.e.} the seller's utility for revenue $Z$ is $\U(Z)$. The objective of the mechanism is to maximize the seller's expected utility $\E_{Z}\left[\U(Z)\right]$. Further, each buyer $i$ is associated with an arbitrary utility function $\u_i:\Z\times\N\times T_i\rightarrow \R^+$; buyer $i$'s utility for receiving $q$ items for a payment of $p$ is $\u_i(p,q,t_i)$ where $t_i$ is his type. This includes scenarios where the seller and/or the buyers are risk averse.

We make following assumptions on the nature of utility functions, as well as on the optimal mechanism: (a) a buyer's payment is allowed to be negative (positive transfer is allowed) however the set of possible payments from a buyer is restricted to integers between $-L$ and $L$, (b) for each $i, t_i\in T_i, p,q$, $|\u_i(p,q,t_i)|\le L$, and (c) $\U(0)=0$.

The assumption about integral payments is for simplicity in exposition; if all buyers' and seller's utility functions are monotone and have bounded derivative, then any $BIC$ mechanism can be converted into an $\epsilon$-BIC mechanism whose payments are integral - we simply round down the payment to a suitably discretized value so that the utility does not change by more than $\epsilon$.  The reason we allow positive transfers in the above framework is that it can raise much more utility for the seller (see~\cite{BCK12} for details).

\smallskip
In the optimal mechanism, let $x_{ipq}(\vec{t})$ be an indicator variable that denote whether buyer $i$ is given $q$ items for a payment of $p$ when buyers' type vector is $\vec{t}$. For any $\vec{t}$, allocation and payments must be feasible for the following set of constraints $\F(\vec{t})$, where the final constraint imposes ex-post IR:
\[ \begin{array}{rcll}
\sum_{i,p,q} q x_{ipq}(\vec{t}) & \le & m & \\
\sum_{p,q} x_{ipq}(\vec{t}) & = & 1 & \forall i \\
x_{ipq}(\vec{t}) & \in & \{0,1\} & \forall i,p,q \\
x_{ipq}(\vec{t}) & = & 0 & \forall i, p,q \mbox{ s.t. } \u_i(p,q,t_i) < 0
\end{array}\]
Let $X_{ipq}(t_i)$ denote the probability that buyer $i$ is allocated $q$ items at price $p$ when his revealed type is $t_i$.  The problem reduces to checking the feasibly of the following set of constraints
\[ \begin{array}{rcll}
f_i(t_i) X_{ipq}(t_i) - \sum_{\vec{t} | t_i \in \vec{t}} \mu(\vec{t}) x_{ipq}(\vec{t}) &=&0 &\forall i, t_i\in T_i, p, q\\
\sum_{\vec{t}}\mu(\vec{t})\U\left(\sum_{i,p,q}p\times x_{ipq}(\vec{t})\right) &\ge& R
\end{array}\]
subject to constraints given by $\F(\vec{t})$ (for all type vectors $\vec{t}$) and the BIC constraint given below:
\[ \begin{array}{rcll}
\sum_{p,q} \left(X_{ipq}(t_i)-X_{ipq}(t'_i)\right)\times \u_i(p, q, t_i) &\ge& 0 & \forall i, t_i,t'_i\in T_i\\
\end{array}\]
%We use techniques developed in Section~\ref{sec:multi-unit-budget} to solve the feasibility problem.
As before, the {\sc Oracle} problem $\WMP(\vec{\alpha}, \vec{\beta})$, for each type vector $\vec{t}$, solves the following
\[ \begin{array}{c}
\mbox{Max} \left(\sum_{i,p,q,t_i\in \vec{t}} \alpha_{ipqt_i} x_{ipq}(\vec{t}) + \beta \times \U\left(\sum_{i,p,q}p\times x_{ipq}(\vec{t})\right)\right)
\end{array}\]
subject to constraints $\F(\vec{t})$. We note that $\alpha_{ipqt}$ can be both positive or negative, however $\beta$ is always positive.

\smallskip
\noindent
{\em Solving $\WMP(\vec{\alpha}, \vec{\beta})$ for $\vec{t}$:} The problem can be solved optimally using a dynamic program as follows: let $A(i, k, z)$ be the maximum value of the objective function when $k$ items are allocated to first $i$ buyers generating a total revenue of $z$ from them. The recurrence relation for the dynamic program is
\[\begin{array}{rl}
A(i,k,z)=&\min_{q \le m-k}\max_p\{\beta\times\left(U(z+p)-U(z)\right)+A(i+1, k+q, z+p) + \alpha_{ipqt_i}  \}
\end{array}\]
Note, our assumption that payments to buyers are always integral and bounded between $-L$ and $+L$. Thus, $A(*,*,z)$ needs to defined only for $2nL$ different values of revenue, and the dynamic program can be solved in time polynomial in $L$.
%The details about the sampling process as same as described in Section~\ref{sec:multi-unit-budget}.
Using the multiplicative weight update framework, we get following theorem:
\begin{theorem}\label{thm:risk_aversion}
Given a seller with arbitrary utility function over total revenue and buyers with arbitrary utility functions over payments and allocations,
there exists a polynomial time algorithm to compute an $\epsilon$-BIC mechanism with expected utility $\OPT-\epsilon$ for the seller.
\end{theorem}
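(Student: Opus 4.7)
The plan is to instantiate the generic framework of Section~\ref{sec:framework} exactly as laid out in the setup immediately preceding the theorem statement. The action variables $x_{ipq}(\vec{t})\in\{0,1\}$ live in $\F(\vec{t})$, which already encodes ex-post individual rationality by forcing $x_{ipq}(\vec{t})=0$ whenever $\u_i(p,q,t_i)<0$, and the holistic variables $X_{ipq}(t_i)$ are tied to the action variables through the linearity-of-expectation equalities $\expert$. Because $\u_i(p,q,t_i)$ is a fixed constant once $(i,p,q,t_i)$ is pinned down, the BIC constraints remain linear in the $X$ variables, so the only genuinely non-linear ingredient is the seller-utility inequality $\sum_{\vec{t}}\mu(\vec{t})\,\U\bigl(\sum_{i,p,q} p\,x_{ipq}(\vec{t})\bigr)\ge R$, where $R$ is a guess for $\OPT$.

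To turn this into an algorithm I would first binary-search over $R$ in a discretized range covering the possible values of $\U$, and for each guess run the generalized multiplicative weight procedure of Section~\ref{sec:ahk} to decide feasibility. Maintain dual multipliers $\alpha_{ipqt_i}$ for the expert equalities (of either sign) and a single non-negative multiplier $\beta$ for the seller-utility inequality. As in Section~\ref{sec:multi-unit-budget}, the oracle decouples into (i) a polynomial-size LP over the holistic variables enforcing only the BIC constraints and (ii) the per-type-vector problem $\WMP(\vec{\alpha},\vec{\beta})$ over $\F(\vec{t})$. The estimation of expert-constraint violations is handled exactly as in Section~\ref{sec:rt}: in each round I draw $C=\mathrm{poly}(n,m,L,1/\epsilon)$ fresh samples from $\DD$, invoke the oracle on each sample, and control sampling error by a Hoeffding plus union bound argument in the style of Lemma~\ref{lem:whp}.

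The main technical step, and where I expect the principal obstacle to sit, is solving $\WMP(\vec{\alpha},\vec{\beta})$ in polynomial time despite the non-linear $\U$. The dynamic program sketched just before the theorem statement accomplishes this by adding the cumulative revenue collected so far as an explicit state variable: $A(i,k,z)$ is the optimal value when buyers $i,\ldots,n$ share $k$ items and accumulated revenue so far is $z$. Because payments are integers in $[-L,L]$ by assumption, $z$ ranges over only $O(nL)$ values, keeping the state space polynomial. The transition $\beta\bigl(\U(z+p)-\U(z)\bigr)+A(i+1,k+q,z+p)+\alpha_{ipqt_i}$ correctly accounts for the incremental contribution of $\U$ when buyer $i$ is assigned $(p,q)$, and restricting $(p,q)$ to pairs with $\u_i(p,q,t_i)\ge 0$ bakes ex-post IR directly into the oracle. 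Discretization and boundedness of payments are essential here; without them the revenue state would have to range over an exponential set, and the fact that $\U$ enters only through differences $\U(z+p)-\U(z)$ is what allows a single scalar state $z$ to suffice.

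Putting the pieces together, the widths of the expert equalities and the utility inequality are both bounded by $\mathrm{poly}(L)$, so Theorem~\ref{thm:ahk} gives $K=\mathrm{poly}(n,m,L,\max_i |T_i|,1/\epsilon)$ rounds. The final mechanism picks a round $\ell\in\{1,\ldots,K\}$ uniformly at random and executes $\WMP(\vec{\alpha}^{\ell},\vec{\beta}^{\ell})$ on the reported $\vec{t}$; the same averaging argument as in Lemma~\ref{lem:ahk-multi-unit} shows that this mechanism is $\epsilon$-BIC and ex-post IR (the latter because every action chosen by $\WMP$ lies in $\F(\vec{t})$). Choosing $R$ to be the largest discretized guess for which the procedure does not declare infeasibility yields expected seller utility at least $\OPT-\epsilon$, proving the theorem.
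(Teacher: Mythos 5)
Your proposal follows essentially the same approach as the paper: the same action variables $x_{ipq}(\vec{t})$ with ex-post IR encoded into $\F(\vec{t})$, the same holistic variables and expert equalities, the same observation that $\alpha$ may have either sign while $\beta\ge 0$, the same decoupling into a poly-size LP plus a per-type-vector $\WMP$ problem, and the same dynamic program $A(i,k,z)$ with accumulated revenue $z$ as a scalar state (using the bounded integral-payment assumption to keep the state space of size $O(nL)$). Your additional remarks (the explicit binary search over $R$, and the observation that $\U$ enters only through differences so a single scalar state suffices) are clarifying exposition rather than a genuinely different argument.
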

We note that if we only wish to impose ex-interim IR, then we remove the final constraint in $\F(\vec{t})$, and instead add:
\[ \begin{array}{rcll}
\sum_{p,q} X_{ipq}(t_i) \times \u_i(p, q, t_i) & \ge & 0 &\forall i, t_i\in T_i\\
\end{array}\]

\newpage
\appendix
\section{Multi-item Auctions}
\label{sec:multi-item}
In this section, we illustrate how to use our framework for multi-item auctions, when buyers have additive valuations and items are divisible. The setting is described in detail in Section~\ref{sec:prelim}. As earlier, the objective is to design an $\epsilon$-BIC mechanism that is feasible, ex-post IR, satisfies buyers' budget constraints, and maximizes seller's revenue.
For simplicity in exposition, we assume that buyers' budgets are public; our techniques can be easily extended to handle private budgets. The technically interesting aspect is that we can write constraints coupling allocations and payments across buyers, for instance {\em envy-freeness} constraints. At the end, in Section~\ref{sec:general_allocation_constraint}, we show that our framework provides a simple recipe for deriving recent results in \cite{CDW_EC12,CDW_FOCS12,CDW_SODA13} for general allocation constraints.

In the mechanism, let $x_{ij}(\vec{t})$ denote the fraction of item $j$ allocated to buyer $i$ when the buyers' type vector is $\vec{t}$; and let $p_{i}(\vec{t})$ be the price paid by buyer $i$. For any $\vec{t}$, the allocation and payments must be feasible for the following convex set of constraints, denoted by $\F(\vec{t})$:
\begin{align*}
&\vec{x}(\vec{t}), \vec{p}(\vec{t}) \in \mathcal{P}(\vec{t}) \qquad \mbox{and} \qquad p_{i}(\vec{t}) \le B_i   \ \ \forall i \qquad \mbox{and}\\
&\sum_j v_{ij}(t_i) x_{ij}(\vec{t}) \ge p_i(\vec{t}) \qquad \forall i, t_i\in T_i
\end{align*}
Let $X_{ijt_i}$ denote the expected allocation of item $j$ to buyer $i$ when his type is $t_i$ and let $P_{it_i}$ denote the expected price paid. For target revenue $\OPT$, the problem reduces to checking feasibility of constraints:
\[\begin{array}{rcll}
 f_i(t_i) X_{ijt_i} &=& \sum_{\vec{t} | t_i\in \vec{t}} \mu(\vec{t}) x_{ij}(\vec{t}) &  \forall i,j, t_i\in T_i \\
 f_i(t_i) P_{it_i} &=& \sum_{\vec{t} | t_i\in \vec{t}} \mu(\vec{t}) p_{i}(\vec{t}) & \forall i,t_i\in T_i \\
 \end{array} \]
subject to the constraints given by $\F(\vec{t})$ for all $\vec{t}$ and:
\[ \begin{array}{rl}
\sum_{j} v_{ij}(t_i) X_{ijt_i} - P_{it_i} &\ge  \sum_{j} v_{ij}(t_i) X_{ijt'_i} -  P_{it'_i}  \ \  \forall i, t_i,t'_i\in T_i \\
\sum_{i,t} f_i(t_i) P_{it_i} & \ge  \OPT \\
\end{array}\]
We use techniques developed in Section~\ref{sec:multi-unit-budget} to solve the feasibility problem. As before, it would require implementation of {\sc Oracle}$(\vec{\alpha}, \vec{\beta})$ that reduces to solving $\WMP(\alpha, \beta)$, which for each type vector $\vec{t}$, computes:
$$  \mbox{Maximize} \sum_{i,j} \alpha_{ijt_i} x_{ij}(\vec{t}) + \sum_{i} \beta_{it_i} p_{i}(\vec{t})$$
subject to the constraints $\F(\vec{t})$. As $\F(\vec{t})$ is convex, $\WMP(\vec{\alpha}, \vec{\beta})$ can be solved optimally using an LP for any $\vec{t}$. Thus we get the following theorem:
\begin{theorem}\label{thm:multi-item}
For multi-item auctions with additive valuations and arbitrary convex constraints on the set of feasible allocations an payment in any realization, there exists a polynomial time algorithm to compute a  feasible, ex-post rational, $\epsilon$-BIC mechanism with expected revenue $\OPT-\epsilon$.
\end{theorem}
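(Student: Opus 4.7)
The plan is to instantiate the generic multiplicative weight update framework from Section~\ref{sec:framework} exactly as done for multi-unit auctions in Section~\ref{sec:multi-unit-budget}, with the only structural difference being that the per-sample action space $\F(\vec{t})$ is now a convex polytope (rather than a discrete set) and the action variables $x_{ij}(\vec{t})$ are divisible allocations of heterogeneous items.  Concretely, I would take as action variables $\{x_{ij}(\vec{t}), p_i(\vec{t})\}$ with $\langle \vec{x}(\vec{t}), \vec{p}(\vec{t}) \rangle \in \F(\vec{t})$ (which bakes in the polyhedral feasibility $\mathcal{P}(\vec{t})$, the ex-post budget bounds $p_i(\vec{t}) \le B_i$, and the ex-post IR constraint $\sum_j v_{ij}(t_i) x_{ij}(\vec{t}) \ge p_i(\vec{t})$), and as holistic variables $X_{ijt_i}, P_{it_i}$ constrained by the linear BIC inequalities and the revenue constraint $\sum_{i,t_i} f_i(t_i) P_{it_i} \ge \OPT$.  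The linkage constraints $\expert$ are the equalities $f_i(t_i) X_{ijt_i} = \sum_{\vec{t} \mid t_i \in \vec{t}} \mu(\vec{t}) x_{ij}(\vec{t})$ and the analogous equality for $P_{it_i}$.

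Next I would attach dual multipliers $\alpha_{ijt_i}, \beta_{it_i}$ (possibly of either sign, since the $\expert$ constraints are equalities) to these linkage constraints and invoke the generalized AHK algorithm from Section~\ref{sec:ahk}.  As in the multi-unit case, the resulting {\sc Oracle}$(\vec{\alpha}, \vec{\beta})$ decouples into (a) a polynomial-size linear program ${\tt LP_{exp}}$ over the holistic variables enforcing BIC and the revenue constraint, and (b) a per-type-vector maximization $\WMP(\vec{\alpha}, \vec{\beta})$ that asks, for each $\vec{t}$, to maximize $\sum_{i,j} \alpha_{ijt_i} x_{ij}(\vec{t}) + \sum_i \beta_{it_i} p_i(\vec{t})$ subject to $\langle \vec{x}(\vec{t}), \vec{p}(\vec{t}) \rangle \in \F(\vec{t})$.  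Crucially, $\F(\vec{t})$ is convex, so regardless of the signs of $\alpha_{ijt_i}, \beta_{it_i}$, this subproblem is a linear program of polynomial size and can be solved exactly in polynomial time.

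To keep the overall running time polynomial despite $\F(\vec{t})$ being indexed by exponentially many $\vec{t}$, I would use the same sampling argument as in Section~\ref{sec:rt}: in each AHK round $\ell$, draw $C = \mathrm{poly}(n,m,L,1/\epsilon)$ independent samples from $\mathcal{D}$, replace the equality constraints by their $\delta$-relaxed empirical versions (with $\delta = \epsilon/\mathrm{poly}(n,m,L)$), and run $\WMP(\vec{\alpha}^\ell, \vec{\beta}^\ell)$ on each sampled $\vec{t}$.  Since the width $\rho$ of each linkage constraint is bounded by $\max_i B_i \le L$ in the payment coordinates and by $1$ in the allocation coordinates, and the total number of linkage constraints is $\mathrm{poly}(n,m,\max_i |T_i|) \le \mathrm{poly}(n,m,L)$, Theorem~\ref{thm:ahk} guarantees convergence in $K = \mathrm{poly}(n,m,L,1/\epsilon)$ rounds.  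The final mechanism outputs, on input $\vec{t}$, the allocation and payment obtained by running $\WMP(\vec{\alpha}^\ell, \vec{\beta}^\ell)$ for a uniformly random round $\ell \in [K]$.

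For correctness, the exact analogue of Lemma~\ref{lem:ahk-multi-unit} applies: per-sample feasibility of $\F(\vec{t})$ delivers ex-post IR and ex-post budget feasibility automatically in every outcome; the AHK guarantee together with the Hoeffding-based sampling bound of Lemma~\ref{lem:whp} implies that the expected allocations and payments under the random-round mechanism match $X_{ijt_i}, P_{it_i}$ up to additive $O(\delta)$, which in turn yields $\epsilon$-BIC (since the BIC constraints are linear in the holistic variables with coefficients bounded by $L$) and expected revenue at least $\OPT - \epsilon$.  The main obstacle in this argument is not an analytic one but a modeling one: ensuring that the mixed sign of the dual multipliers on the equality $\expert$ constraints does not break tractability of $\WMP$.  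This is precisely why the convexity of $\F(\vec{t})$ is essential here (in contrast to the discrete packing constraint of Section~\ref{sec:multi-unit-budget}, which required a tailored dynamic program); with convex $\F(\vec{t})$ the subproblem is a polynomial-size LP regardless of sign, and the rest of the argument proceeds verbatim from Sections~\ref{sec:multi-unit-budget} and~\ref{sec:framework}.
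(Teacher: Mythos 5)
Your proposal is correct and mirrors the paper's own proof in Appendix~\ref{sec:multi-item}: the same action/holistic variable split, the same linkage constraints, the same decoupling of the {\sc Oracle} into ${\tt LP_{exp}}$ plus a per-sample $\WMP$, and the same key observation that convexity of $\F(\vec{t})$ lets you solve $\WMP$ as a polynomial-size LP regardless of the signs of the dual multipliers, after which the sampling and AHK analysis from Sections~\ref{sec:rt} and~\ref{sec:ahk} apply verbatim. Nothing is missing; this is essentially the argument the authors give.
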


\paragraph{\bf Envy-Freeness} The techniques in this section can be extended to support envy-freeness in each realization of the mechanism. Consider the setting where $\F(\vec{t})$ only has the ex-post IR constraints
\begin{align*}
\vec{x}(\vec{t}), \vec{p}(\vec{t}) \ge \vec{0} \qquad \mbox{and} \qquad \sum_j v_{ij}(t_i) x_{ij}(\vec{t}) \ge p_i(\vec{t}) \qquad \forall i
\end{align*}
{\em Envy-freeness} means that no buyer receives more utility from the allocation and payment made to another buyer. We can enforce this by simply adding
\begin{equation*}
\sum_{j}x_{ij}(\vec{t})v_{ij}(t_i)-p_i(\vec{t}) \ge \sum_{j}x_{i'j}(\vec{t})v_{ij}(t_i)-p_{i'}(\vec{t}) \qquad  \forall i,i'
\end{equation*}
to $\F(\vec{t})$. Note, $\F(\vec{t})$ still remains convex and $\WMP(\vec{\alpha}, \vec{\beta})$ can be solved in polynomial time.

Our auctions extend to the case where there are $m = O(1)$ indivisible items. The valuation to buyer $i$ for obtaining subset $S$ of items given type $t$ is $v_i(S,t)$.

\subsection{General Allocation Constraints}\label{sec:general_allocation_constraint}
For multi-dimensional auctions without ex-post IR or payment constraints, our approach can handle general allocation constraints (over indivisible items) as considered in \cite{CDW_EC12,CDW_FOCS12,CDW_SODA13}. We illustrate this for the multi-item auction problem with additive valuations and indivisible items, where the allocation set $\F(\vec{t}) =\{ \vec{x}(\vec{t}) \in \mathcal{P}\}$ is downward closed and only admits to $c$-approximately optimizing welfare~\cite{CDW_SODA13}. The key hurdle with applying the above methodology as is, is that the dual multipliers can be negative. But this is easy to fix: Since we do not have payment constraints, we simply need to check the feasibility of the following {\em inequality} constraint:
\[\begin{array}{rcll}
 f_i(t_i) X_{ijt_i} & \le & \sum_{\vec{t} | t_i \in \vec{t}} \mu(\vec{t}) x_{ij}(\vec{t}) &  \forall i,j, t_i \\
 \end{array} \]
 subject to the constraints given by $\F(\vec{t})$ for all $\vec{t}$ and the BIC, ex-interim IR, and revenue constraints on $\{X_{ijt}, P_{it}\}$.

Consider a relaxed feasibility problem for $c \ge 1$:
\[\begin{array}{rcll}
 f_i(t_i) X_{ijt_i} & \le & c  \times \sum_{\vec{t} | t_i \in \vec{t}} \mu(\vec{t}) x_{ij}(\vec{t}) &  \forall i,j, t_i \\
 \end{array} \]
The {\sc Oracle}$(\alpha)$ problem maximizes welfare, $\sum_{i,j} \alpha_{ijt_i} x_{ij}(\vec{t})$ subject to $\F(\vec{t})$. Since the constraint is inequality, we have $\alpha \ge 0$ so that the {\sc Oracle} sub-problem is precisely welfare maximization. Suppose this has a $c$ approximation. Suppose there is a  feasible solution to the original program of the following form
\[\begin{array}{rcll}
 f_i(t_i) X^*_{ijt_i} & \le & \sum_{\vec{t} | t_i \in \vec{t}} \mu(\vec{t}) x^*_{ij}(\vec{t}) &  \forall i,j, t_i \\
 \end{array} \]
Then, we must have:
\[\begin{array}{rcl}
\sum_{i,j,t_i} \alpha_{ijt_i} f_i(t_i) X^*_{ijt_i} & \le & \sum_{i,j,t_i} \alpha_{ijt_i} \sum_{\vec{t} | t_i \in \vec{t}} \mu(\vec{t}) x^*_{ij}(\vec{t})
 \end{array} \]
Since {\sc Oracle}$(\alpha)$ has a $c$-approximation, there exists a solution $\tilde{x}$ that can be computed in polynomial time such that:
\[\begin{array}{rcl}
\sum_{i,j,t_i} \alpha_{ijt_i} f_i(t_i) X^*_{ijt_i} & \le & c \times \sum_{i,j,t_i} \alpha_{ijt_i} \sum_{\vec{t} | t_i \in \vec{t}} \mu(\vec{t}) \tilde{x}_{ij}(\vec{t})
 \end{array} \]
This implies that the multiplicative weight update method using the $c$-approximate oracle will never declare infeasibility of the relaxed program. The rest of the argument is the same as before, and yields a randomized mechanism using the multiplicative weight method.

The only issue is in making the final mechanism obey the BIC constraints since the inequality constraint implies the probability of allocation need not be exactly $X_{ijt_i}$ in expectation, and could be larger. The fix is however a standard trick that works for downward closed environments. Suppose the final slack in the constraint for $(i,j,t_i)$, that is, the ratio of LHS to RHS, is $\beta_{ijt_i} \le 1$, then whenever the mechanism decides to allocate item $j$ to buyer $i$ for type $t_i$, we allocate with probability $\beta_{ijt_i}$. This ensures the expected allocation is {\em exactly} $ X_{ijt_i}/c$. If the charge exactly $P_{it}/c$ via an all-pay scheme, then this ensures that the mechanism is BIC with revenue $\OPT/c$.

\section{Correlated Distributions}\label{sec:correlated}
In this section, we illustrate how to extend our techniques to handle correlated distributions for multi-unit auctions. Our techniques are generic and easily extend to other settings considered in the paper. Note, in Section~\ref{sec:multi-unit-budget}, the BIC constraint of multi-unit auctions is stated as
\[\begin{array}{rcl}
\sum_{q}v_i(q,t_i)X_{iq}(t_i) - P_{i}(t_i) &\ge&  \sum_q v_i(q,t_i)X_{iq}(t'_i) - P_{i}(t'_i) \\
& &\forall i, t_i, t'_i\in T_i
\end{array}\]
where $X_{iqt_i}$ and $P_{it_i}$ are defined as
\[\begin{array}{rcll}
f_i(t_i) X_{iqt_i} &=& \sum_{\vec{t} | t_i\in \vec{t}} \mu(\vec{t}) x_{iq}(\vec{t}) &\forall i, q,t_i\in T_i \\
f_i(t_i) P_{it_i} &=& \sum_{\vec{t} | t_i\in \vec{t}} \mu(\vec{t}) p_{i}(\vec{t}) &\forall i, t_i\in T_i \\
\end{array}\]
The BIC constraint stated above requires independence across buyers. With correlated distributions, if buyer $i$'s real type is $t_i$ and his reported type is $t'_i$, then the probability that he is allocated $q$ items is not necessarily $X_{iq}(t'_i)$; same holds for expected payments. With correlated distributions, these values need to computed explicitly. Fix buyer $i$; for each $t_i,t'_i\in T_i$, define $X_{iqt_it'_i}$ as the probability that buyer $i$ is allocated $q$ items when his real type is $t_i$ and the reported type is $t'_i$; $P_{it_it'_i}$ is defined as buyer $i$'s expected payment when his real type is $t_i$ and reported type is $t'_i$. For each type vector $\vec{t_{-i}}$ of other buyers, let $z_{it_it'_i}(\vec{t_{-i}})=\frac{\mu(t_i|\vec{t}_{-i})}{\mu(t'_i|\vec{t}_{-i})}$. We now simplify the value of $X_{iqt_it'_i}$.
\[\begin{array}{rcl}
X_{iqt_it'_i} &=& \sum_{\vec{t}_{-i}}\mu(\vec{t}_{-i}|t_i) \times x_{iq}(t'_i,\vec{t}_{-i})\\
&=& \sum_{\vec{t}|t'_i \in \vec{t}}\frac{\mu(\vec{t}_{-i}|t_i)\times \mu(\vec{t})}{\mu(\vec{t}_{-i}|t'_i)\times f_{i}(t'_i)}\times x_{iq}(\vec{t}) \\
&=& \sum_{\vec{t}|t'_i \in \vec{t}} \frac{\mu(t|\vec{t}_{-i})}{\mu(t'|\vec{t}_{-i})}\times \mu(\vec{t})\times x_{iq}(\vec{t}) \\
&=& \frac{1}{f_i(t_i)}\times \sum_{\vec{t}|t'_i \in \vec{t}}z_{it_it'_i}(\vec{t}_{-i})\mu(\vec{t})x_{iq}(\vec{t})
\end{array}\]
Note, buyer $i$'s type in $\vec{t}$ is $t'_i$. Thus the problem reduces to checking the feasibility of following constraints, denoted by $\expert$
\[\begin{array}{c}
f_i(t_i)X_{iqt_it'_i} = \sum_{\vec{t}|t'_i \in \vec{t}}z_{it_it'_i}(\vec{t}_{-i})\mu(\vec{t})x_{iq}(\vec{t}) \ \ \forall i, q, t_i,t'_i \in T_i\\
f_i(t_i)P_{it_it'_i} = \sum_{\vec{t}|t'_i \in \vec{t}}z_{it_it'_i}(\vec{t}_{-i})\mu(\vec{t})p_i(\vec{t}) \ \ \ \ \forall i, t_i,t'_i \in T_i
\end{array}\]
subject to constraints $\F(\vec{t})$ for all $\vec{t}$, and:
\[\begin{array}{rcll}
\sum_q X_{iqt_it_i}v_{i}(q,t_i) - P_{it_it_i} &\ge& \sum_q X_{iqt_it'_i}v_{i,q}(t_i) - P_{it_it'_i} \\
 & &\forall i, t_i,t'_i \in T_i
\end{array}\]
\[\begin{array}{rcll}
\sum_{i,t_i\in T_i} f_{i}(t_i)P_{it_it'_i} &\ge& R &\forall i, t_i\in T_i \\
%X_{iqt_it'_i} &\in& [0,1] &\forall i,q,t_i,t'_i \in T_i\\
%P_{it_it'_i} &\in& [0,B_i] &\forall i,t_i,t'_i \in T_i
\end{array}\]
As before, the implementation of {\sc Oracle}$(\vec{\alpha},\vec{\beta})$ requires solving $\WMP(\vec{\alpha}, \vec{\beta})$, which for each vector $\vec{t} = (t_i, \vec{t}_{-i})$ of types, maximizes:
$$\sum_{i,q,t'_i} \alpha_{iqt'_it_i} z_{it'_it_i}(\vec{t}_{-i})x_{iq}(\vec{t}) +  \sum_{i,t'_i} \beta_{it'_it_i} z_{it'_it_i}(\vec{t}_{-i})p_i(\vec{t})$$
subject to constraints $\F(\vec{t})$. This problem can be solved optimally using a dynamic program just as before.

\iffalse
\medskip
\noindent
{\em Solving $\WMP(\vec{\alpha}, \vec{\beta})$:} The problem can be solved optimally using a dynamic program. If buyer $i$ is allocated $q$ items in the allocation, then his payment is \\
(a) if $\sum_{t} \beta_{jt_it'_i} z_{it_it'_i}\le 0$, then $p_i(\vec{t}) =0$, otherwise\\
(b) $p_{i}(\vec{t})=\min\{B_i, v_{i}(q,t')\}$. \\ \fi

\medskip
\noindent
{\em Running Time:} Note that the width $\rho$ of the constraint $\expert$ is linearly dependent upon $\max_{i,t_i,t'_i \in T_i}z_{it_it'_i}(\vec{t}_{-i})$, denoted by $z_{\max}$. The term $z_{max}$ can be seen as the maximum ratio of conditional probabilities of two different types for a buyer, when the type vector of other buyers is fixed. %As the number of rounds $T$ required in the multiplicative update framework is polynomially dependent upon $\rho$, we get the following theorem:
\begin{theorem}
For multi-unit auctions with budgeted buyers and correlated distributions, if for any buyer $i$, conditioned of each type vector of other buyers $\vec{t}_{-i}$, if the ratio of maximum and minimum probability of his two types is bounded by $z_{max}$, then there exists a polynomial (in $z_{max}$) time algorithm to compute a feasible, ex-post rational, $\epsilon$-BIC mechanism with expected revenue $\OPT-\epsilon$.
\end{theorem}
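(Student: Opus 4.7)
The plan is to instantiate the overall framework of Section~\ref{sec:framework} on the reformulated LP derived at the start of this section: the holistic variables are $X_{iqt_it'_i}$ and $P_{it_it'_i}$, the action variables are $\{x_{iq}(\vec{t}), p_i(\vec{t})\}$ subject to $\F(\vec{t})$, and the constraints playing the role of experts are precisely $\expert$, which now carry the importance-sampling weights $z_{it_it'_i}(\vec{t}_{-i})$. I would introduce dual multipliers $(\vec{\alpha}^{\ell}, \vec{\beta}^{\ell})$ for $\expert$ and observe, as in Section~\ref{sec:multi-unit-budget}, that {\sc Oracle}$(\vec{\alpha}^{\ell}, \vec{\beta}^{\ell})$ decouples into (i) a polynomial-size LP over $\{X_{iqt_it'_i}, P_{it_it'_i}\}$ enforcing the BIC and revenue inequalities, and (ii) the per-realization problem $\WMP(\vec{\alpha}^{\ell}, \vec{\beta}^{\ell})$ on the action variables. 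The only difference from the independent case is that the coefficient on $x_{iq}(\vec{t})$ inside $\WMP$ is $\sum_{t'_i} \alpha^{\ell}_{iqt'_it_i} z_{it'_it_i}(\vec{t}_{-i})$ and similarly for $p_i(\vec{t})$.

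Next I would verify that $\WMP(\vec{\alpha}^{\ell},\vec{\beta}^{\ell})$ is still polynomial-time solvable for a fixed $\vec{t}$. Once an allocation $q$ for buyer $i$ is fixed, the optimal payment is $0$ if the weighted payment coefficient $\sum_{t'_i}\beta^{\ell}_{it'_it_i} z_{it'_it_i}(\vec{t}_{-i})$ is non-positive and $\min\{B_i, v_i(q,t_i)\}$ otherwise; this reduces $\WMP$ to an additive knapsack with per-buyer weights in $\{0,1,\ldots,m\}$ and capacity $m$, solvable by the same $O(nm^2)$ dynamic program used in Section~\ref{sec:multi-unit-budget}. Because $|x_{iq}(\vec{t})|\le 1$, $|p_i(\vec{t})|\le L$, and $z_{it_it'_i}(\vec{t}_{-i})\le z_{\max}$, the width of the $\expert$ constraints is $\rho = O(mL\,z_{\max})$; plugging into Theorem~\ref{thm:ahk} gives $K = \mathrm{poly}(n,m,L,z_{\max},1/\epsilon)$ rounds.

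The remaining step is the sampling analysis of Section~\ref{sec:rt}. I would draw, at each round $\ell$, an independent sample set $S_\ell$ of $C = \mathrm{poly}(n,m,L,z_{\max},K,1/\epsilon)$ type vectors from $\mu$ (using the assumed black-box access) and replace each $\expert$ constraint by its empirical estimator: the contribution of a sample $\vec{t}$ with $t'_i\in\vec{t}$ to the estimator of $X_{iqt_it'_i}$ is $z_{it_it'_i}(\vec{t}_{-i}) x_{iq}(\vec{t})$, a bounded random variable in $[0,z_{\max}]$. By Hoeffding plus a union bound over all $O(nmL^2 K)$ pairs, the empirical value is within $\delta = \epsilon/\mathrm{poly}(\cdot)$ of the true expectation with high probability, which is why $C$ has to scale quadratically in $z_{\max}$. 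With this relaxation, {\sc Oracle} on the relaxed sampled program reduces exactly to solving $\WMP(\vec{\alpha}^{\ell},\vec{\beta}^{\ell})$ on the samples, and the argument of Section~\ref{sec:rt} carries over unchanged to conclude that the mechanism $\M(\vec{\alpha}[1..K],\vec{\beta}[1..K])$ that picks a uniformly random round and plays $\WMP(\vec{\alpha}^{\ell},\vec{\beta}^{\ell})$ on the realized $\vec{t}$ is feasible, ex-post IR, $\epsilon$-BIC, and has revenue $\OPT - \epsilon$.

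The main obstacle is tracking $z_{\max}$ through both the width (a factor of $z_{\max}$) and the sampling concentration (another factor of $z_{\max}^2$); since the theorem only demands polynomial dependence on $z_{\max}$, these blow-ups are acceptable. The one subtlety worth double-checking is that each sample's importance weight $z_{it_it'_i}(\vec{t}_{-i}) = \mu(t_i|\vec{t}_{-i})/\mu(t'_i|\vec{t}_{-i})$ is computable from the prior, so the empirical estimator and the {\sc Oracle} coefficients can actually be assembled in polynomial time from samples alone; this follows from black-box access to $\mu$ conditionals, which we assume.
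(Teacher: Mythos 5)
Your proposal is correct and follows essentially the same route as the paper's treatment in Appendix~\ref{sec:correlated}: rewrite the interim quantities $X_{iqt_it'_i}, P_{it_it'_i}$ via the importance-sampling weights $z_{it_it'_i}(\vec{t}_{-i})$, decouple the {\sc Oracle} into a small LP on holistic variables plus a per-realization $\WMP$ solved by the same knapsack dynamic program, and rerun the sampling argument of Section~\ref{sec:rt} with the width (and hence number of rounds and samples) now scaling polynomially in $z_{\max}$. The one point you flag --- that computing the coefficients $z_{it_it'_i}(\vec{t}_{-i})$ requires evaluating conditional densities of the prior, not merely drawing samples --- is a legitimate observation about the black-box model and is worth stating explicitly, but it does not change the structure of the proof.
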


\section{Budget Feasible Mechanisms}\label{sec:budget-feasible}
In this section, we illustrate our results for designing budget feasible mechanisms. We first describe the setting: the auctioneer has a (hard) budget of $B$; we assume that $B$ is an integer bounded by $L$. There are $n$ agents, and each agent produces a different type of item. Each agent is associated with a private cost to produce the item; specifically, agent $i$'s cost of producing his item is $c_i(t)$ where $t$ is his type. The auctioneer's valuation of set of items $S$ is given by a $\G(S) =\sum_{j\in S}v_j$, {\em i.e.}, linear in set $S$. The objective is to design a BIC mechanism that maximizes auctioneer's expected valuation of procured items such that (a) the auctioneer never pays more than $B$ in any realization, and (b) the mechanism is ex-post IR for each agent, i.e. if the item from agent $i$ of type $t$ is procured, then he is paid at least $c_i(t)$.
For simplicity in exposition, we assume $\{c_i(t_i)\}$ are integers in $\{0,1,2,\ldots, L\}$.
%The auctioneer has a budget constraint $B$, which we also assume to be integer, and at most $L$.

%We consider the setting in which the auctioneer's valuation function is additive across items procured. Recall that $c_i(t_i)$ is the cost incurred by agent $i$ when his reported type is $t_i$.

Let $\vec{t}$ be the reported valuation vector. In the realization of the mechanism, let $x_{i}(\vec{t})$ denote whether the item is procured from agent $i$, and $p_i(\vec{t})$ be the payment made to agent $i$. The allocation and payments must be feasible for the following set of constraints, denoted by $\F(\vec{t})$:
\[\begin{array}{rcll}
\sum_i p_i(\vec{t}) &\le& B \\
x_{i}(\vec{t})c_{i}(t_i) &\le& p_i(\vec{t}) &\forall i\\
x_{i}(\vec{t}) &\in& \{0,1\} &\forall i\\
p_{i}(\vec{t}) &\in& [0, B] &\forall i
\end{array}\]
Let $X_{it_i}$ be the probability that agent $i$'s item is procured when his type is $t_i$; and $P_{it_i}$ be the expected payment made to him in this case.  Let $S(\vec{t})$ denote the set of agents whose items are procured when the type vector is $\vec{t}$. Let $\OPT$ be the auctioneer's expected valuation of the procured items. Then the problem reduces to checking feasibility of constraints:
\[\begin{array}{rcll}
 f_i(t_i) X_{it_i} &=& \sum_{\vec{t} | t_i\in \vec{t}} \mu(\vec{t}) x_{i}(\vec{t}) &  \forall i,j, t_i\in T_i \\
 f_i(t_i) P_{it_i} &=& \sum_{\vec{t} | t_i \in \vec{t}} \mu(\vec{t}) p_{i}(\vec{t}) & \forall i,t_i \in T_i \\
 \sum_{\vec{t}}\mu(\vec{t})\sum_i v_ix_{i}(\vec{t}) &\ge& \OPT
 \end{array} \]
subject to constraints $\F(\vec{t})$ for all $\vec{t}$ and:
\[ \begin{array}{rcll}
P_{it_i} - c_{i}(t_i) X_{it_i} &\ge&  P_{it'_i}-c_{i}(t_i) X_{it'_i}  & \forall i, t_i, t'_i\in T_i \\
%X_{it_i} &\in& [0,1] & \forall i,t_i\in T_i\\
%P_{it_i} &\in& [0, B] &\forall i,t_i\in T_i
\end{array}\]

As in Section~\ref{sec:multi-unit-budget}, we require implementation of {\sc Oracle}$(\vec{\alpha},\vec{\beta},\gamma)$ which involves solving $\WMP(\vec{\alpha}, \vec{\beta}, \gamma)$, where for each vector $\vec{t}$ of types, it computes:
\begin{equation}
\mbox{Max} \left(\sum_{i} \alpha_{it_i} x_{i}(\vec{t}) + \sum_{i} \beta_{it_i} p_{i}(\vec{t}) + \gamma\sum_i v_ix_{i}(\vec{t})\right)
\end{equation}
subject to the constraints $\F(\vec{t})$. As earlier, $\alpha_{it_i}$ and $\beta_{it_i}$ can be positive or negative, however $\gamma$ is always positive. We now illustrate a polynomial time algorithm to solve this problem. Recall, $S(\vec{t})$ is the set of agents whose items are procured when the bid vector is $\vec{t}$. Then for each $i\in S(\vec{t})$, we have

\smallskip
\noindent
(a) $\beta_{it_i}\le 0\Rightarrow p_{i}(\vec{t})=c_{i}(t_i)$, and

\smallskip
\noindent
(b) If there is at least one agent $i\in S(\vec{t})$ such that $\beta_{it_i}\ge 0$, then there exists at most one agent $j={\mbox{argmax}}_{i\in S(\vec{t})}\beta_{it_i}$ such that $p_{j}(\vec{t})>c_{j}(t_j)$, and then $p_{j}(\vec{t}) = B-\sum_{i\in S(\vec{t}), i\neq j}c_{i}(t_i)$; such an agent can be seen as the {\em winner} in the allocation.

\medskip
Clearly, if agent $i$ is the winner of an allocation, then all agents $j\in S(\vec{t})$ must have $\beta_{jt_j}\le \beta_{it_i}$. The algorithm is as follows: We guess one agent as the winner; let $i$ be that agent. Let $S_i = \{j | \beta_{jt_j}\le \beta_{it_i}\}$. The goal is to find a subset $Q \subseteq S_i$ with the following properties:
\begin{itemize}
\item $i \in Q$ since we guess $i$ as belonging to $Q$.
\item $\sum_{j \in Q} c_j(t_j) \le B$, since $Q$ must be budget feasible.
\item {\sc Oracle}  $=\sum_{j \in Q, j \neq i} \left(\alpha_{jt_j} +  (\beta_{jt_j} - \beta_{it_i}) c_{j}(t_j) + \gamma v_j\right) + \alpha_{it_i} + \gamma v_i + B \beta_{it_i}$ is maximized.
\end{itemize}

This is now just a knapsack problem where the $c_j(t_j)$ are item sizes; $B$ is the knapsack capacity; and the profit of item $j$ is $\alpha_{jt_j} +  (\beta_{jt_j} - \beta_{it_i}) c_{j}(t_j) + \gamma v_j$. Since we assumed $c_j(t_j)$ are integers in $\{0,1,2,\ldots, L\}$, this can be solved in time polynomial in $n,L$. Thus we get the following theorem.
\begin{theorem}
When the auctioneer has additive valuation for the set of procured items, then there exists a poly-time algorithm to compute an ex-post IR, $\epsilon$-BIC mechanism with expected revenue $OPT-\epsilon$.
\end{theorem}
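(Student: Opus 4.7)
The plan is to instantiate the general multiplicative-weight framework of Section~\ref{sec:framework}. First, I would introduce action variables $x_i(\vec{t}) \in \{0,1\}$ (whether agent $i$'s item is procured) and $p_i(\vec{t})$ (payment to $i$), constrained by $\F(\vec{t})$ (budget feasibility $\sum_i p_i(\vec{t}) \le B$ and ex-post IR $p_i(\vec{t}) \ge c_i(t_i)\,x_i(\vec{t})$), together with holistic variables $X_{it_i}, P_{it_i}$ coupled to the action variables by the linearity-of-expectation equalities that constitute $\expert$. On the holistic side I impose the BIC constraint $P_{it_i} - c_i(t_i)X_{it_i} \ge P_{it'_i} - c_i(t_i)X_{it'_i}$ and a revenue target $\sum_{\vec{t}}\mu(\vec{t})\sum_i v_i x_i(\vec{t}) \ge \OPT$. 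Applying the generalized AHK procedure of Section~\ref{sec:ahk} to decide the feasibility of this system then produces, in each round $\ell$, dual multipliers $(\alpha^\ell,\beta^\ell,\gamma^\ell)$ and a call to an {\sc Oracle} that decouples into (i) a polynomial-size LP over the holistic variables and (ii) the per-realization problem $\WMP(\alpha,\beta,\gamma)$ on the action variables. Since the width of every constraint in $\expert$ is $O(L)$ and $f_i(t_i)\ge 1/L$, the sampling analysis of Section~\ref{sec:rt} (Lemma~\ref{lem:whp}) transfers verbatim.

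The crux of the proof, and the only problem-specific content, is to solve $\WMP(\alpha,\beta,\gamma)$ in polynomial time in the presence of possibly negative $\alpha_{it_i}, \beta_{it_i}$ (while $\gamma \ge 0$). I would first establish a structural lemma for any optimal assignment on type vector $\vec{t}$: among the procured set $S(\vec{t})$, agents with $\beta_{it_i}\le 0$ should be paid exactly $c_i(t_i)$ (raising the payment only hurts the objective and tightens the budget), and at most one agent---the one maximizing $\beta_{it_i}$ over $S(\vec{t})$---is paid above cost, absorbing the entire remaining budget $B-\sum_{j\in S(\vec{t}),j\ne i}c_j(t_j)$. Guessing this ``winner'' $i$ in an outer loop of $n$ iterations reduces the residual problem to choosing $Q\ni i$ from $\{j:\beta_{jt_j}\le \beta_{it_i}\}$ with $\sum_{j\in Q}c_j(t_j)\le B$ maximizing
\[
\sum_{j\in Q,\,j\ne i}\bigl(\alpha_{jt_j}+(\beta_{jt_j}-\beta_{it_i})\,c_j(t_j)+\gamma v_j\bigr) + \alpha_{it_i} + \gamma v_i + B\beta_{it_i}.
\]
This is a $0/1$ knapsack with integer sizes $c_j(t_j)\in\{0,\ldots,L\}$ and integer capacity $B\le L$, solvable by the standard pseudo-polynomial DP in $O(nL)$ time; the overall oracle runs in $O(n^2 L)$ time per sampled $\vec{t}$. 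The main obstacle to watch for is verifying that the winner guess is consistent with the optimum---this follows because fixing any $i$ as winner and imposing $\beta_{jt_j}\le \beta_{it_i}$ for $j\in Q$ only restricts the feasible region, so maximizing over the $n$ winner choices (together with $Q=\emptyset$ for the ``no positive-$\beta$ winner'' case) recovers the global optimum of $\WMP$.

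With the {\sc Oracle} subproblem solvable in polynomial time, the rest is a direct application of the framework: run the MWU machinery of Section~\ref{sec:multi-unit-budget} for $K=O(L^2\log(nL)/\delta^2)$ rounds with $\delta=\epsilon/(nL)$, estimating per-sample contributions from $\mathrm{poly}(n,L,1/\epsilon)$ samples drawn from $\DD$, producing the dual sequences $(\alpha^{[1\ldots K]},\beta^{[1\ldots K]},\gamma^{[1\ldots K]})$. The final mechanism picks $\ell\in[K]$ uniformly at random and, on the reported $\vec{t}$, outputs the $\WMP(\alpha^\ell,\beta^\ell,\gamma^\ell)$ solution. By construction of $\F(\vec{t})$, every realization is ex-post IR and budget-feasible; Theorem~\ref{thm:ahk} combined with Lemma~\ref{lem:whp} then yields $\epsilon$-BIC and expected auctioneer utility at least $\OPT-\epsilon$, which is the claimed theorem.
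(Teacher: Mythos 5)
Your proposal matches the paper's own proof essentially step for step: the same action/holistic variable decomposition, the same $\F(\vec{t})$ and $\expert$ constraints, the identical structural observation that at most one procured agent (the one with maximum $\beta_{it_i}$) is paid above cost and absorbs the residual budget, and the same guess-the-winner reduction to a knapsack with sizes $c_j(t_j)$, capacity $B$, and profit $\alpha_{jt_j} + (\beta_{jt_j}-\beta_{it_i})c_j(t_j) + \gamma v_j$. Your explicit handling of the ``no positive-$\beta$ winner'' case is a small point the paper leaves implicit, but the approach is the same.
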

When the costs $c_i(\vec{t})$ are not integers, we can discretize them for the purpose of solving the knapsack instance. This leads to a violation of the budget $B$ by an additive amount $\epsilon$, but otherwise our result remains unchanged.

Our results also extends to the setting where the auctioneer is interested in procuring $m = O(1)$ items, and has arbitrary valuations $\G(S)$ on subsets $S$ of items of constant size.


\begin{thebibliography}{10}

\bibitem{abrams}
Z.~Abrams.
\newblock Revenue maximization when bidders have budgets.
\newblock In {\em SODA}, pages 1074--1082, 2006.

\bibitem{AlaeiFHHM12}
S.~Alaei, H.~Fu, N.~Haghpanah, J.~D. Hartline, and A.~Malekian.
\newblock Bayesian optimal auctions via multi- to single-agent reduction.
\newblock In {\em EC}, 2012.

\bibitem{AHK05}
S.~Arora, E.~Hazan, and S.~Kale.
\newblock The multiplicative weights update method: {A} meta algorithm and
  applications.
\newblock {\em Theory of Computing}, 8:121--164, 2012.

\bibitem{Chen}
X.~Bei, N.~Chen, N.~Gravin, and P.~Lu.
\newblock Budget feasible mechanism design: from prior-free to bayesian.
\newblock In {\em STOC}, pages 449--458, 2012.

\bibitem{morris}
D.~Bergemann and S.~Morris.
\newblock Robust mechanism design.
\newblock {\em Econometrica}, 73(6):1771--1813, 2005.

\bibitem{BCK12}
A.~Bhalgat, T.~Chakraborty, and S.~Khanna.
\newblock Mechanism design for a risk averse seller.
\newblock In {\em WINE}, 2012.

\bibitem{BCMX10}
S.~Bhattacharya, V.~Conitzer, K.~Munagala, and L.~Xia.
\newblock Incentive compatible budget elicitation in multi-unit auctions.
\newblock In {\em SODA}, 2010.

\bibitem{BGGM10}
S.~Bhattacharya, G.~Goel, S.~Gollapudi, and K.~Munagala.
\newblock Budget constrained auctions with heterogeneous items.
\newblock In {\em STOC}, 2010.

\bibitem{border}
K.~C. Border.
\newblock Reduced form auctions revisited.
\newblock {\em Economic Theory}, 31(1):167--181, 2007.

\bibitem{borgs}
C.~Borgs, J.~T. Chayes, N.~Immorlica, M.~Mahdian, and A.~Saberi.
\newblock Multi-unit auctions with budget-constrained bidders.
\newblock In {\em EC}, pages 44--51, 2005.

\bibitem{CDW_STOC12}
Y.~Cai, C.~Daskalakis, and S.~M. Weinberg.
\newblock An algorithmic characterization of multi-dimensional mechanisms.
\newblock In {\em STOC}, 2012.

\bibitem{CDW_FOCS12}
Y.~Cai, C.~Daskalakis, and S.~M. Weinberg.
\newblock Optimal multi-dimensional mechanism design: Reducing revenue to
  welfare maximization.
\newblock In {\em FOCS}, 2012.

\bibitem{CDW_SODA13}
Y.~Cai, C.~Daskalakis, and S.~M. Weinberg.
\newblock Reducing revenue to welfare maximization : Approximation algorithms
  and other generalizations.
\newblock In {\em SODA}, 2013.

\bibitem{CHMS10}
S.~Chawla, J.~D. Hartline, D.~L. Malec, and B.~Sivan.
\newblock Multi-parameter mechanism design and sequential posted pricing.
\newblock In {\em STOC}, pages 311--320, 2010.

\bibitem{ChawlaMM11}
S.~Chawla, D.~L. Malec, and A.~Malekian.
\newblock Bayesian mechanism design for budget-constrained agents.
\newblock In {\em EC}, 2011.

\bibitem{che1}
Y.~Che and J.~Gale.
\newblock Expected revenue of the all-pay auctions and first-price sealed bid
  auctions with budget constraints.
\newblock {\em Economic Letters}, 50:373--380, 1996.

\bibitem{che2}
Y.~Che and J.~Gale.
\newblock The optimal mechanism for selling to a budget constrained buyer.
\newblock {\em Journal of Economic Theory}, 92(2):198--233, 2000.

\bibitem{cj}
O.~Compte and P.~Jehiel.
\newblock On quitting rights in mechanism design.
\newblock {\em American Economic Review}, 97(2):137--141, 2007.

\bibitem{vince}
V.~Conitzer and T.~Sandholm.
\newblock Self-interested automated mechanism design and implications for
  optimal combinatorial auctions.
\newblock In {\em ACM Conference on Electronic Commerce}, pages 132--141, 2004.

\bibitem{CDW_EC12}
C.~Daskalakis and S.~M. Weinberg.
\newblock Symmetries and optimal multi-dimensional mechanism design.
\newblock In {\em EC}, 2012.

\bibitem{DLS09}
S.~Dobzinski, R.~Lavi, and N.~Nisan.
\newblock Multi-unit auctions with budget limits.
\newblock In {\em FOCS}, 2008.

\bibitem{DHW11}
P.~Dutting, M.~Henzinger, and I.~Weber.
\newblock An expressive mechanism for auctions on the web.
\newblock In {\em WWW}, 2011.

\bibitem{EF99}
P.~Eso and G.~Futo.
\newblock Auction design with a risk averse seller.
\newblock In {\em Economic Letters}, 1999.

\bibitem{GMP12}
G.~Goel, V.~Mirrokni, and R.~Paes~Leme.
\newblock Polyhedral clinching auctions and the adwords polytope.
\newblock In {\em STOC}, 2012.

\bibitem{HartlineL}
J.~D. Hartline and B.~Lucier.
\newblock Bayesian algorithmic mechanism design.
\newblock In {\em STOC}, pages 301--310, 2010.

\bibitem{laffont}
J.~Laffont and J.~Robert.
\newblock Optimal auction with financially constrained buyers.
\newblock {\em Economic Letters}, 52(2):181 -- 186, 1996.

\bibitem{mook}
D.~Mookherjee and S.~Reichelstein.
\newblock Dominant strategy implementation of {B}ayesian incentive compatible
  allocation rules.
\newblock {\em J. Econ. Theory}, 56(2):378--399, 1992.

\bibitem{myerson}
R.~B. Myerson.
\newblock Optimal auction design.
\newblock {\em Mathematics of Operations Research}, 6(1):58--73, 1981.

\bibitem{nisan}
N.~Nisan.
\newblock Google's auction for tv ads.
\newblock In {\em ESA}, 2009.

\bibitem{vohra}
M.~Pai and R.~Vohra.
\newblock Optimal auctions with financially constrained bidders.
\newblock {\em Working Paper}, 2008.

\bibitem{PST95}
S.~A. Plotkin, D.~B. Shmoys, and E.~Tardos.
\newblock Fast approximation algorithms for fractional packing and covering
  problems.
\newblock {\em Math. Oper. Res.}, 1995.

\bibitem{Singer}
Y.~Singer.
\newblock Budget feasible mechanisms.
\newblock In {\em FOCS}, 2010.

\bibitem{Y95}
N.~E. Young.
\newblock Randomized rounding without solving the linear program.
\newblock In {\em SODA}, 1995.

\end{thebibliography}
\end{document}